\newcommand{\M}{\mathcal{M}}
\newcommand{\N}{\mathcal{N}}
\newcommand{\X}{\mathfrak{X}}
\newcommand{\h}{\bar{h}}
\newcommand{\D}{\mathfrak{D}}
\newcommand{\I}{\mathfrak{I}}
\newcommand{\Div}{\mathrm{div}_p\,}
\newcommand{\Ric}{\mathrm{Ric}}
\newcommand{\R}{\mathbb{R}}
\newcommand{\vol}{\,dx\,d\mu}
\theoremstyle{plain}
\newtheorem{Theorem}{Theorem}
\newtheorem{assumption}{Assumption}
\newtheorem{Corollary}{Corollary}
\newtheorem{Proposition}{Proposition}
\newtheorem{Definition}{Definition}
\newtheorem{Lemma}{Lemma}
\theoremstyle{remark}
\newtheorem*{Remark}{Remark}
\title{Exponential convergence to equilibrium\\ for kinetic Fokker-Planck equations}
\author{Simone Calogero\\[0.5cm]Departamento de Matem\'atica Aplicada\\[0.2cm] Universidad de Granada, Spain\\[0.2cm] E-Mail: calogero@ugr.es}
\date { }
\begin{document}

\maketitle

%

\begin{abstract}
A class of linear kinetic Fokker-Planck equations with a non-trivial diffusion matrix 
and with periodic boundary conditions in the spatial variable is considered. After formulating the problem in a geometric setting, the question of the rate of convergence to equilibrium is studied within the formalism of differential calculus on Riemannian manifolds. Under explicit geometric assumptions on the velocity field, the energy function and the diffusion matrix, it is shown that global regular solutions converge in time to equilibrium with exponential rate. The result is proved by estimating the time derivative of a modified entropy functional, as recently proposed by Villani. For spatially homogeneous solutions the assumptions of the main theorem reduce to the curvature bound condition for the validity of logarithmic Sobolev inequalities discovered by Bakry and Emery. The result applies to the relativistic Fokker-Planck equation in the low temperature regime, for which exponential trend to equilibrium was previously unknown.
\end{abstract}


\section{Introduction}
\label{intro}
The evolution of many physical or biological systems is characterized by two kinds of driving mechanism: diffusion and friction. 
The competition between these two types of dynamics may lead the system to a thermodynamical equilibrium. The purpose of this paper is to study the rate of convergence to equilibrium for a class of linear models that exhibit this kind of behavior. 
The simplest model in this class is the Fokker-Planck equation~\cite{Risken} on the density function of the system:
\begin{equation}\label{FPeq}
\partial_t\rho=\Delta\rho +\nabla\cdot(\xi\rho)\:,\qquad t>0\:,\quad \xi\in\R^N\:.
\end{equation}
In this model the density function $\rho$ depends on the variables $(t,\xi)$, the diffusion term is given by $\Delta\rho$ and the friction term by $\nabla\cdot (\xi\rho)$. All sufficiently regular solutions of~\eqref{FPeq} converge in time to a Maxwellian type distribution, with exponential rate of convergence~\cite{Carrillo}. The convergence holds for instance in the $L^1$-norm.

An important generalization of~\eqref{FPeq}, often considered in the mathematical and in the physical literature~\cite{AMTU, Risken}, is
\begin{equation}\label{FPeq2}
\partial_t\rho=\nabla\cdot(D(\nabla\rho +\rho\nabla E))\:,
\end{equation}
where $D=D(\xi)$ is the diffusion matrix and $E=E(\xi)$ is the energy function ($E=|\xi|^2/2$ for~\eqref{FPeq}). It is assumed that $D$ is positive definite and that  
\[
\Theta^{-1}=\int_{\R^N} e^{-E}\,d\xi
\]
is bounded. Equation~\eqref{FPeq2} admits an unique invariant probability measure, which is given by $d\mu=\rho_\infty(\xi)\,d\xi$, where $\rho_\infty=\Theta\,e^{-E}$. Non-negative solutions of~\eqref{FPeq2} with unit mass
converge to the equilibrium state $\rho_\infty$ with exponential rate if the matrix $D$ and the function $E$ satisfy an inequality known as the curvature bound condition~\cite{BE,B}. Let us briefly recall the argument of the proof. In terms of $h(t,\xi)=\rho(t,\xi)/\rho_\infty(\xi)$, equation~\eqref{FPeq2} takes the form
\[
\partial_t h=\nabla\cdot(D\nabla h)- D\nabla E\cdot\nabla h\:, 
\]
or, equivalently, 
\begin{equation}\label{FPeq3}
\partial_th=\Delta^Gh+Qh\:,
\end{equation}
where $\Delta^G$ denotes the Laplace-Beltrami operator
associated with the Riemannian metric $G=D^{-1}$ and $Q$ is the vector field
\[
Qh=D\nabla\log u\cdot\nabla h\:,\quad u=\sqrt{\det D}\,e^{-E}\:.
\]   
The entropy functional associated with~\eqref{FPeq3} is given by
\begin{equation}\label{entropyfun}
\D[h]=\int_{\R^N} h\log h\,d\mu\:,
\end{equation}
and satisfies 
\begin{equation}\label{entropyid}
\frac{d}{dt}\D[h]=-\I[h]\:,
\end{equation}
where 
\[
\I[h]=\int_{\R^N}\frac{D\nabla h\cdot\nabla h}{h}\,d\mu
\]
is the entropy dissipation functional. Let $\Ric^G$ and $\nabla^G$ denote the Ricci curvature and the Levi-Civita connection of $G$. Bakry and Emery proved in~\cite{BE,B} that if the curvature bound condition
\begin{equation}\label{curvaturebound}
\Ric^G-\nabla^GQ\geq \alpha\,G\:,\quad\text{for some }\ \alpha>0\:,
\end{equation}
is satisfied, then the logarithmic Sobolev inequality $\D[f]\leq (2\alpha)^{-1}\I[f]$ holds for all sufficiently regular probability densities $f$. Replacing in~\eqref{entropyid} we obtain 
\[
\frac{d}{dt}\D[h]\leq -2\alpha\,\D[h]\:,
\]
whence the entropy functional decays exponentially as $O(e^{-2\alpha t})$. The classical Csisz\'ar-Kullback inequality~\cite{CS},
\begin{equation}\label{CKineq}
\| h-1\|_{L^1(d\mu)}\leq \sqrt{2\D}\:,
\end{equation}
implies that $h$ converges to $1$  as $t\to\infty$ in $L^1(d\mu)$ with exponential rate or, equivalently, the solution $\rho$ of~\eqref{FPeq2} converges to $\rho_\infty$ in $L^1(d\xi)$ with exponential rate.

Fokker-Planck type equations appear also in kinetic theory and these are the subject of the present investigation. Assuming periodic boundary conditions in space, the simplest kinetic Fokker-Planck (or Kramers) equation is given by~\cite{Risken}
\begin{equation}\label{kineticFP}
\partial_t f+ p\cdot\nabla_x f=\Delta_p f+\nabla_p\cdot(pf)\:,\quad t>0\:,\ x\in\mathbb{T}^N\:,\ p\in\R^N\:.
\end{equation}
Here $f=f(t,x,p)\geq 0$ is the particles distribution in phase-space, with $(t,x)$ denoting the space-time variables and $p$ the momentum variable. Equation (\ref{kineticFP}) describes the kinetic motion of a system of particles undergoing stochastic collisions with the molecules of a homogeneous fluid in thermal equilibrium. From a mathematical point of view,~\eqref{kineticFP} is more complicated than~\eqref{FPeq}, due to the presence of the transport derivative $p\cdot\nabla_x$ and the fact that the diffusion operator $\Delta_p$ is degenerate, i.e., it acts only on the momentum variable. Moreover equation~\eqref{FPeq} can be seen as the spatially homogeneous version of~\eqref{kineticFP}.

The problem of how fast the solutions of~\eqref{kineticFP} converge to equilibrium was solved only recently by H\'erau and Nier~\cite{Herau} and by Villani~\cite{hypo}. Both references establish exponential convergence to equilibrium, however by completely different methods. In~\cite{Herau} the problem is tackled by spectral analysis techniques for hypoellitpic operators---exponential rate of convergence is implied by the existence of a spectral gap in the spectrum of the Fokker-Planck operator---, while the proof given in~\cite[Th.~28]{hypo} is based on the study of the evolution of a properly modified entropy functional. (See~\cite{DV} for an earlier study of the trend to equilibrium for~\eqref{kineticFP}. In the latter reference the authors prove that convergence to equilibrium occurs as fast as $O(t^{-1/\varepsilon})$, for any $\varepsilon>0$.)

In this paper the entropy method is applied to study the trend to equilibrium for the following generalization of~\eqref{kineticFP}:
\begin{equation}\label{kineticFP2}
\partial_t f+v(p)\cdot\nabla_x f=\nabla_p(D(\nabla_p f+ f\nabla_p E))\:,
\end{equation}
where $D=D(p)$ is the diffusion matrix and $E=E(p)$ is the energy function. Equation~\eqref{kineticFP2} reduces to~\eqref{FPeq2} in the spatially homogeneous case. The vector field $v$ in the transport term is the velocity field;  the two most important examples for the applications are the classical velocity field
\[
v(p) = p
\]
and the relativistic velocity field
\[
 v(p)=\frac{p}{\sqrt{1+|p|^2}}\:.
\]
Setting $h=f/e^{-E}$, we may rewrite~\eqref{kineticFP2} in the form
\begin{equation}\label{kineticFP3}
\partial_t h+v(p)\cdot\nabla_xh=\Delta_p h+Wh\:,\quad t>0\:,\ x\in\mathbb{T}^N\:,\ p\in\R^N\:,
\end{equation}
where $\Delta_p$ is the Laplace-Beltrami operator associated with the metric $g=D^{-1}$ 
and 
\[
Wh=D\nabla_p\log u\cdot\nabla_p h\:,\quad u=\sqrt{\det D}\,e^{-E}\:.
\]
The main result of this paper is presented in Section~\ref{mainresult}. It is proved that under suitable assumptions on the functions $v,D,E$ , which take the form of geometric inequalities involving $v, g$ and $W$, smooth solutions of~\eqref{kineticFP3} with unit mass converge in time to the equilibrium state $h_\infty\equiv 1$ with exponential rate of convergence. The proof is based on a generalization of the argument for spatially homogeneous solutions outlined before, the main difference being that the entropy functional~\eqref{entropyfun} is replaced by a properly modified entropy functional, as proposed recently by Villani~\cite{hypo}. Specifically, the modified entropy used in the present paper is given by
\[
\mathcal{E}[h]=k\D[h]+a\I_{pp}[h]+2b\I_{xp}[h]+c\I_{xx}[h]\:,
\]
where $a,b,c,k$ are properly chosen positive constants,
\[
\D[h]=\int_{\mathbb{T}^N}\int_{\R^N}h\log h\,d\mu\,dx
\]
is the classical entropy, see~\eqref{entropyfun},
\[
\I_{pp}[h]=-\frac{d}{dt}\D[h]=\int_{\mathbb{T}^N}\int_{\R^N}g(\partial_ph,\partial_p\log h)\,d\mu\,dx
\]
is the entropy dissipation functional (or Fisher information), and
\begin{align*}
&\I_{xx}[h]=\int_{\mathbb{T}^N}\int_{\R^N}g(\mathscr{A}_xh,\mathscr{A}_x\log h)\,d\mu\,dx\:,\\
&\I_{xp}[h]=\int_{\mathbb{T}^N}\int_{\R^N}g(\mathscr{A}_xh,\partial_p\log h)\,d\mu\,dx\:,
\end{align*}
where $\mathscr{A}_xh$ is a modified gradient of $h$ in the $x$-variable ($\mathscr{A}_xh=\partial_x h$ when $v(p)=p$); see Definition~\ref{modifiedgradient} in Section~\ref{mainresult}. Note that by Young's inequality, the constant $a,b,c$ can be chosen so that $\mathcal{E}[h]\geq k\mathfrak{D}[h]$, whence exponential decay of the modified entropy entails exponential decay of the classical entropy. The reason for choosing the modified entropy in the above form is based on the following identities, which are proved in Section~\ref{derivative}. Firstly, $d\D[h]/dt=-\I_{pp}[h]$; secondly, the mixed derivatives term $\I_{xp}[h]$ appears in the time derivative of $\I_{pp}[h]$:
\[
\frac{d}{dt}\I_{pp}[h]=-2\I_{xp}[h]+\dots
\]
where the omitted terms here and below are geometric functions of $v,g,W$ and the derivatives of $h$. Next, differentiating in time the function $\I_{xp}[h]$, the positive term $\I_{xx}[h]$ shows up:
\[
\frac{d}{dt}\I_{xp}[h]=-\I_{xx}[h]+\dots
\]
while the time derivative of $\I_{xx}[h]$ does not reproduce any of the other functionals, i.e., 
\[
\frac{d}{dt}\I_{xx}[h]=\dots
\]
The above identities lead to a differential inequality on the modified entropy of the form
\begin{equation}\label{E}
\frac{d}{dt}\mathcal{E}[h]\lesssim-\mathcal{E}[h]+\dots
\end{equation}
provided we assume that the following logarithmic Sobolev inequality in the variables $(x,p)$ holds:  
\[
\D[h]\lesssim\I_{xx}[h]+\I_{pp}[h]\:.
\]
In Appendix~\ref{logproof} we show that the latter bound is satisfied when properly defined metric and vector field on the  product manifold $\mathbb{T}^N\times\R^N$ verify the Bakry-Emery curvature bound condition~\eqref{curvaturebound}. 
The remaining assumptions of the main theorem ensure that the omitted terms in the inequality~\eqref{E} give a non-positive contribution and Gr\"onwall's inequality completes the proof. The full list of assumptions is given in Section~\ref{mainresult}. They take the form of geometric inequalities on the functions $g,v,W$, which are trivially satisfied for the classical Fokker-Planck equation~\eqref{kineticFP}.

It should be noticed that the trend to equilibrium for~\eqref{kineticFP2} is also studied in~\cite{hypo}.
The main differences between Villani's approach and the one in this paper are the following. In~\cite{hypo} the author exploits the fact that~\eqref{kineticFP2} can be written in the form
\[
\partial_th=(A^*A+B)h\:,
\]
where $B=-v(p)\cdot\nabla_x$, $A^*$ is the adjoint of $A$ in the Hilbert space $L^2(d\mu)$ and $A=\sigma\nabla$, $\sigma=\sqrt{D}$. The proof of~\cite[Th.~28]{hypo} makes crucial use of the iterated commutators 
\[
[A,B]\:,\quad [B,[A,B]]\:,\quad [B,[B,[A,B]]]\:,\quad\dots
\] 
in the spirit of H\"ormander's hypoellipticity theory~\cite{H}. However the use of commutators, while natural in the context of regularity theory, presents some disadvantages for the problem of convergence to equilibrium. In particular it leads to very heavy and sometimes obscure calculations, which, as pointed out by Villani at the beginning of the proof of Lemma 32 in~\cite{hypo}, ``might be an indication that a more appropriate formalism is still to be found". The main theorem of the present paper is proved using the formalism of differential geometry, which helps to clarify the meaning of many long expressions that have to be controlled in Villani's work. Moreover, given a diffusion matrix $D$, a velocity field $v$ and an energy function $E$, one may check directly if our assumptions are verified. As opposed to this, one has to find a suitable way to decompose the vector fields $A,B$ and the iterated commutators in order to verify the assumptions of~\cite[Th.~28]{hypo}. Apart from the different approach to the problem, several ideas introduced by Villani in his important work will be adapted (and simplified) to the present context, resulting in a less technical proof. %

In Section~\ref{appl} we apply the main result of the paper to establish exponential convergence to equilibrium for the relativistic Fokker-Planck equation~\cite{AC0, DH2, H1} when the temperature of the surrounding bath is sufficiently small. We do not know whether the small temperature assumption is only a technical condition or a necessary physical one. However in~\cite{AC1}, see also~\cite{angst}, it is shown that spatially homogeneous solutions converge exponential fast to equilibrium for all temperatures, although the result is proved not in $L^1$ but in a weighted $L^2$ norm. 
  
Finally, we remark that a natural and interesting extension of the results of this paper would be to consider the equation on the whole space, $x\in\R^N$, with an external confining potential (this is the case studied in~\cite{Herau, hypo}). 

\section{Set-up}\label{setup}
This section is devoted to introduce the transport-diffusive equation that will be the subject of our study, as well as the geometric tools that are needed to this purpose.
 
Let $\N$ be a $N$-dimensional smooth manifold and $\M$ a smooth $M$-dimensional manifold with a $C^2$ Riemannian metric $g$. It is assumed that $\N$ and $\M$ are globally diffeomorphic to the torus $\mathbb{T}^N$ and to $\R^M$, respectively. (See also~\cite{leb} for a formulation of Fokker-Planck equations on the cotangent bundle of general manifolds.) Let 
\[
h:\R\times\N\times\M\to [0,\infty)
\]
satisfy an evolution equation of the following form: 
\begin{equation}\label{maineq}
\partial_t h+Th=\Delta_p h+ Wh\:.
\end{equation}
Here $h=h(t,x,p)$, where $x=(x^1,\dots,x^N)$, $p=(p^1,\dots,p^M)$ are global coordinates on $\N$ and $\M$ respectively; $\partial_t$ denotes the partial derivative with respect to $t\in\R$, while $\partial_{x^I}$, $\partial_{p^i}$ denote  the partial derivatives in the coordinates $(x^I, p^i)$. Capital Latin indexes run from 1 to $N$, small Latin indexes run from $1$ to $M$.  Denoting $\X(\mathcal{B})$, $\X_*(\mathcal{B})$ the set of smooth vector fields and one form fields on a manifold $\mathcal{B}$, then $T\in\X(\N\times\M)$, the {\it transport field}, whereas $W\in\X(\M)$. Finally, $\Delta_p$ denotes the Laplace-Beltrami operator on $(\M,g)$. A subscript $p$ is attached to differential operators that act on the variables $p^1,\dots,p^M$ only. 

In order to specify the exact form of the fields $T,W$, some basic facts from differential geometry are required. In the following discussion, which is based mainly on~\cite{oneill}, we consider only (smooth, time dependent) tensor fields defined on $\M$ or $\N$, possibly obtained by projecting tensor fields from $\N\times\M$. We also remark that in the rest of the paper  we do not distinguish between a tensor field defined on $\N$ or $\M$ and its lift on $\N\times\M$.
Let $X_{(I)}$, $P_{(i)}$ denote the frame vector fields basis associated with the coordinates $x^I$, $p^i$ (i.e., $X_{(I)}f=\partial_{x^I}f$, $P_{(i)}f=\partial_{p^i}f$, for all smooth functions $f$ on $\N\times\M$) and $X_*^{(I)}, P_*^{(i)}$ their dual one form fields; clearly $P_{(i)}$ and $P_*^{(i)}$ are metrically equivalent: $g(P_{(i)},Y)=P_*^{(i)}(Y)$, for all $Y\in\X(\M)$. Note that the indexes in round brackets are list indexes and not component indexes, that is to say, for each fixed $i$, $P_{(i)}$ is a geometric object of the same type (a vector field). Now let $v^{(1)},\cdots v^{(N)}$ denote a set of $C^3$ real valued functions on $\M$. We assume that the transport vector field has the following form:
\[
T=v^{(I)}(p)X_{(I)}\:.
\]
We adopt the Einstein summation rule, whereby the sum over repeated indexes is understood.  

\begin{Remark}
When $N=M$ (or more generally when $N\leq M$)
the functions $v^{(I)}$ can be thought of as the (non-zero) components of a vector field over $\M$, the {\it velocity field}. However this interpretation is not necessary and in general not very useful, so we will refrain from adopting it. In particular, the use of the list index $(I)$ in $v^{(I)}$ reminds that this is a scalar function, which affects how geometric differential operators act on it.    
\end{Remark}

For any tensor field $R$ over $\M$, $\nabla_p  R$ denotes the covariant differential of $R$, where $\nabla_p$ is the Levi-Civita connection associated with $g$ (i.e.,$\nabla_p$ is symmetric and $\nabla_p\,g=0$). For a scalar function $f$ on $\M$, $\nabla_p  f $ is the one form $\nabla_p  f(Y)=Y(f)$, for all $Y\in\X(\M$). Any vector field $Z\in\X(\M)$ is metrically equivalent to the one form field $Z_*\in\X_*(\M)$ given by $Z_*(Y)=g(Y,Z)$, for all $Y\in\X(\M)$. The vector field metrically equivalent to $\nabla_p  f$ is the gradient of $f$, which we denote $\partial_p  f$:
\[
g(\partial_p f,Y)=\nabla_p  f (Y)\:,\ \text{or }\ \nabla_p  f=(\partial_p  f)_*\:.
\]
Using the components  $g_{ij}=g(P_{(i)}, P_{(j)})$ of the metric in the base $P_*^{(i)}\otimes P_*^{(j)}$ of the space of type (2, 0) tensor fields, we may express the action of the Laplace-Beltrami operator on scalar functions as 
\begin{equation}\label{laplacecoord}
\Delta_p f=\frac{1}{\sqrt{|g|}}\partial_{p^i}\left(\sqrt{|g|}\,g^{ij}\partial_{p^j} f\right)\:,
\end{equation}
where $g^{ij}$ is the matrix inverse of $g_{ij}$, i.e., $g^{ik}g_{kj}=\delta^i_{\ j}$ and $|g|=\det g$.
There is however a more convenient way to express $\Delta_p f$. For this we recall that the divergence of a vector field $Z\in\X(\M)$ is the contraction of $\nabla_p  Z$, i.e.,
\[
\Div Z=\nabla_p  Z (P_*^{(i)},P_{(i)})\:.
\]
We have the well known formula
\[
\Delta_p f =\Div(\partial_p  f)\:.
\]
Moreover by Stokes theorem 
\begin{equation}\label{stokes}
\int_{\R^M}\Div Z\,\sqrt{|g|}\,dp=0\:,
\end{equation}
for all vector fields $Z\in H^1 (\R^M,\sqrt{|g|}\,dp)$.

Next the definitions of the gradient of a vector field and of the divergence of a second order tensor will be recalled.  Let $\nabla_p  Z:\X_*(\M)\times\X(\M)\to\R$ be the covariant differential of $Z\in\X(\M)$. The metrically equivalent type (2, 0) tensor field $\partial_p Z: \X_*(\M)\times\X_*(\M)\to\R$ given by
\[
\partial_p Z(X_*,Y_*)=\nabla_p  Z(X_*,Y)
\] 
is called the gradient of $Z$. Moreover, given any type (2, 0) tensor field $R$, its divergence is defined as the contraction of $\nabla_p  R$ in the second and third variable, i.e.,
\[
\Div R(Y_*)=\nabla_p  R(Y_*,P_*^{(i)},P_{(i)})
\]  
and thus it is a vector field on $\M$. The following lemma collects some useful identities on the geometric objects defined above.
\begin{Lemma}\label{identities}
Let $f,f_1,f_2$ be smooth real valued functions on $\M$ and $X,Y,Z\in\X(\M)$. Then
\begin{itemize}
\item[(i)] $\partial_p (f_1f_2)=f_1\partial_p  f_2+f_1\partial_p f_2$;
\item[(ii)] $\Div(fZ)=g(\partial_p f,Z)+f\,\Div Z$;
\item[(iii)] $\Div(\partial_p f_1\otimes\partial_p f_2)(Y_*)=\partial_p ^2f_1(\nabla_p  f_2,Y_*)+\Delta_p f_2\,\partial_p  f_1(Y_*)$;
\item[(iv)] $g(X,\partial_p (g(Y,Z)))=\partial_p  Y(Z_*,X_*)+\partial_p  Z(Y_*,X_*)$;
\item[(v)] $g(Y,\Div\partial_p ^2f)=g(Y,\partial_p (\Delta_p f))+\Ric(Y,\partial_p f)$,
\end{itemize}
where $\Ric$ denotes the Ricci curvature tensor of $g$.
\end{Lemma}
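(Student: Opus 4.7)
The five identities are standard calculus rules on a Riemannian manifold: (i)--(ii) are instances of the Leibniz rule for the Levi-Civita connection, (iii) applies Leibniz to the tensor product $\partial_p f_1 \otimes \partial_p f_2$ together with the definition of the Laplacian as the divergence of the gradient, (iv) is metric compatibility of $\nabla_p$, and (v) is a Bochner-type identity. I would handle (i)--(iv) in a few lines each by unwinding the tensorial definitions, and reserve the bulk of the work for (v), where curvature enters.

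For (i), $\nabla_p(f_1 f_2)(Y) = Y(f_1 f_2) = f_1 Y(f_2) + f_2 Y(f_1)$, and raising indices with $g^{-1}$ yields the gradient identity. For (ii), Leibniz gives $\nabla_p(fZ) = \nabla_p f \otimes Z + f\,\nabla_p Z$; contracting the second and third slots produces $\Div(fZ) = \nabla_p f(Z) + f\,\Div Z = g(\partial_p f, Z) + f\,\Div Z$. For (iii), Leibniz on $\partial_p f_1 \otimes \partial_p f_2$ splits the divergence into two pieces: the one that differentiates $\partial_p f_2$ contracts to $\Delta_p f_2\,\partial_p f_1(Y_*)$ since $\Div \partial_p f_2 = \Delta_p f_2$, and the one that differentiates $\partial_p f_1$ becomes $\partial_p^2 f_1(\nabla_p f_2, Y_*)$ by the very definition $\partial_p^2 f_1 = \partial_p(\partial_p f_1)$ (invoking symmetry of the Hessian to line up the contracted indices). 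For (iv), metric compatibility gives $X(g(Y,Z)) = g(\nabla_X Y, Z) + g(Y, \nabla_X Z)$; the left-hand side equals $g(X, \partial_p(g(Y,Z)))$ by the definition of the gradient of a scalar, while the two summands on the right are $\partial_p Y(Z_*, X_*)$ and $\partial_p Z(Y_*, X_*)$ respectively, by the definition of the gradient of a vector field.

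The substantive step is (v). In a local frame, $(\partial_p^2 f)^{ij} = g^{jk} \nabla_k \nabla^i f$, so $\Div \partial_p^2 f$ has components $g^{ik}\nabla_i \nabla_k \nabla^m f$. The goal is to move the outer pair $g^{ik}\nabla_i \nabla_k$ past the free index $m$ to produce $\nabla^m \Delta_p f$. Using symmetry of the Hessian (torsion-freeness of $\nabla_p$) one reorders the inner derivatives, and then the Ricci identity $[\nabla_i,\nabla_n]\omega_k = R^j_{\ kin}\omega_j$ applied to $\omega_k = \nabla_k f$ commutes $\nabla_i$ and $\nabla_n$, producing the Laplacian term $\partial_p \Delta_p f$ plus a contraction of the Riemann tensor that collapses, via the trace $g^{ik}R^j_{\ kin}$, to the Ricci tensor; pairing with $Y$ yields precisely $g(Y, \partial_p \Delta_p f) + \Ric(Y, \partial_p f)$. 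The main obstacle here is purely bookkeeping: fixing conventions for $\partial_p Z$ and for the sign of the Riemann/Ricci tensor consistently with the statement of the lemma. Once the conventions are pinned down, (v) is a direct application of the Ricci identity.
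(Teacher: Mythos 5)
Your proof is correct and uses the same standard arguments the paper invokes: the paper declares (i)--(iii) straightforward (your Leibniz-rule unwindings supply exactly that content), cites Koszul's formula in O'Neill for (iv), and cites Chow et al.\ for (v). Your derivation of (iv) from metric compatibility of the Levi-Civita connection and of (v) from the Ricci commutation identity applied to $\nabla_p f$ are precisely the arguments those citations stand in for, so this is essentially the same approach.
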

\begin{proof}
The proofs of (i)--(iii) are straightforward. The identity (iv) is a consequence of Koszul's formula applied to the Levi-Civita connection; the proof can be found in~\cite[Ch.~3, Th.~11]{oneill}. The identity (v) is a direct consequence of the definition of the Riemann tensor and is proved for instance in~\cite[Lemma~1.45]{chow}.
\end{proof}
We can now define the vector field $W$.
Let $E:\R^M\to \R$, $E\in C^2$, such that
\[
\Theta^{-1}=\int_{\R^M}e^{-E}\,dp
\]
is bounded. Let $d\mu=\Theta e^{-E}dp$, a probability measure on $\R^M$. Then 
\begin{equation}\label{u}
d\mu=\Theta\,u\,\sqrt{|g|}\,dp\:,\quad\text{where }\
u=\frac{e^{-E}}{\sqrt{|g|}}\:.
\end{equation}
The triple $(\M,g,d\mu)$ is a measure metric space.
Let $L$ be the operator in the r.h.s. of~\eqref{maineq}, that is
\[
Lh=\Delta_p h+Wh\:.
\]
We require the field $W$ to be such that $L$ is symmetric in the Hilbert space $L^2(d\mu):=L^2(\R^M,d\mu)$, i.e.,
\begin{equation}\label{selfad}
\int_{\R^M}hLf\,d\mu=\int_{\R^M}fLh\,d\mu\:.
\end{equation}
\begin{Lemma}
The identity~\eqref{selfad} is verified if and only if $W=\partial_p \log u$, or equivalently, $W_*=\nabla_p \log u$.
\end{Lemma}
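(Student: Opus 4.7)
The plan is to recast $L$ in divergence form so that the self-adjoint part of $L$ becomes manifest, and then argue that the remaining first-order piece must vanish. Set $\tilde W=W-\partial_p\log u\in\X(\M)$. Using Lemma~\ref{identities}(ii) together with $\Delta_p h=\Div(\partial_p h)$ one first observes that
\[
\Div(u\,\partial_p h)=u\,\Delta_p h+g(\partial_p u,\partial_p h)=u\bigl(\Delta_p h+g(\partial_p \log u,\partial_p h)\bigr),
\]
so that
\[
Lh=\frac{1}{u}\Div(u\,\partial_p h)+\tilde W(h),
\]
where $\tilde W(h)$ denotes the action of the vector field $\tilde W$ on the scalar $h$.

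Next I would substitute this decomposition into $\int fLh\,d\mu=\Theta\int fLh\,u\sqrt{|g|}\,dp$ and integrate by parts using Stokes' theorem \eqref{stokes}: applying it to $Z=fu\,\partial_p h$ and invoking Lemma~\ref{identities}(ii) once more gives
\[
\int_{\R^M}f\,\Div(u\,\partial_p h)\sqrt{|g|}\,dp=-\int_{\R^M}u\,g(\partial_p f,\partial_p h)\sqrt{|g|}\,dp.
\]
Therefore
\[
\int_{\R^M}fLh\,d\mu=-\Theta\int_{\R^M}g(\partial_p f,\partial_p h)\,u\sqrt{|g|}\,dp+\Theta\int_{\R^M}f\,\tilde W(h)\,u\sqrt{|g|}\,dp.
\]
The first term is manifestly symmetric in $f$ and $h$, so \eqref{selfad} is equivalent to
\begin{equation}\label{eq:antisym}
\int_{\R^M}\bigl[f\,\tilde W(h)-h\,\tilde W(f)\bigr]u\sqrt{|g|}\,dp=0\qquad\text{for all smooth }f,h.
\end{equation}
The ``if'' direction is now immediate, since $\tilde W=0$ makes \eqref{eq:antisym} trivial.

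For the converse, I would derive a companion identity by applying Stokes' theorem to $Z=fhu\tilde W$ and using Lemma~\ref{identities}(ii), which yields
\[
\int_{\R^M}\bigl[f\,\tilde W(h)+h\,\tilde W(f)\bigr]u\sqrt{|g|}\,dp=-\int_{\R^M}fh\,\Div(u\tilde W)\sqrt{|g|}\,dp.
\]
Specializing \eqref{eq:antisym} to $h\equiv 1$ and integrating by parts shows that $\int f\,\Div(u\tilde W)\sqrt{|g|}\,dp=0$ for all $f$, so $\Div(u\tilde W)=0$. Plugging this back into the companion identity and adding to \eqref{eq:antisym} yields $\int f\,\tilde W(h)\,u\sqrt{|g|}\,dp=0$ for all smooth $f,h$. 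Since $u>0$, the fundamental lemma of the calculus of variations gives $\tilde W(h)\equiv 0$ for every smooth $h$, hence $\tilde W\equiv 0$, i.e.\ $W=\partial_p\log u$.

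The main obstacle is the converse: condition \eqref{eq:antisym} only constrains the skew-symmetric part of $\tilde W$ viewed as a first-order operator in $L^2(u\sqrt{|g|}\,dp)$, so to recover pointwise vanishing one must combine it with the symmetric companion identity above (which encodes the Leibniz rule for $\tilde W$ on products) rather than working solely from \eqref{selfad} itself.
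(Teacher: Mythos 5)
Your proof is correct and follows essentially the same route as the paper: both reduce to the observation that, after integration by parts against $d\mu = \Theta u\sqrt{|g|}\,dp$, the part $\Delta_p + \partial_p\log u$ of $L$ is already symmetric, so \eqref{selfad} is equivalent to the antisymmetry condition on $\tilde W = W - \partial_p\log u$. Your divergence-form rewriting $Lh = u^{-1}\Div(u\,\partial_p h) + \tilde W(h)$ makes that split a bit more transparent than the paper's double integration by parts, but it is the same computation. One genuine point in your favor: the paper stops at the identity $\int hLf\,d\mu = \int h\tilde W(f)\,d\mu + \int f(\Delta_p+\partial_p\log u)h\,d\mu$ and declares it "implies the claim," leaving the only-if direction implicit; your argument via the companion symmetric identity and the $h\equiv 1$ specialization (yielding $\Div(u\tilde W)=0$, hence $\tilde W(h)\equiv 0$ pointwise) supplies exactly the step the paper elides, and it is correct.
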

\begin{proof}
We have
\[
\int_{\R^M}hLf\,d\mu=\Theta\int_{\R^M}h\,(\Delta_p f)\, u\,\sqrt{|g|}\,dp+\int_{\R^M}hWf\,d\mu\:.
\]
In the previous equation we use (i)-(ii) of Lemma~\ref{identities} to get
\[
\Div(h\,u\,\partial_p f )=h\,u\,\Delta_p f+h\, g(\partial_p f,\partial_p u)+u\,g(\partial_p f,\partial_p  h)
\]
and so doing we obtain, by~\eqref{stokes},
\begin{equation}\label{temp}
\int_{\R^M}hLf\,d\mu=\int_{\R^M}h(Wf-g(\partial_p f,\partial_p \log u))\,d\mu-\int_{\R^M}g(\partial_p f,\partial_p  h)\,d\mu\:.
\end{equation}
Again we use 
\[
\Div(f\,u\,\partial_p h )=f\,u\,\Delta_p h+f\, g(\partial_p h,\partial_p u)+u\,g(\partial_p f,\partial_p  h)
\]
and so we obtain
\[
\int_{\R^M}hLf\,d\mu=\int_{\R^N}h(W-\partial_p \log u)f\,d\mu+\int_{\R^M}f (\Delta_p +\partial_p \log u)h\,d\mu\:,
\]
which implies the claim.
\end{proof}
We conclude this section by proving some integration by parts formulas.
\begin{Lemma}\label{intbyparts}
The following identities hold true, for all smooth real valued functions $f,h$ on $\M$: 
\begin{align}
&\int_{\mathbb{T}^N}hTf\,dx=-\int_{\mathbb{T}^N} fTh\,dx\:, \label{IPFT}\\
&\int_{\R^M}hLf\,d\mu=-\int_{\R^M}g(\partial_p f,\partial_p  h)\,d\mu\:.\label{IPF1}
\end{align}
\end{Lemma}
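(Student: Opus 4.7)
Both identities are essentially direct consequences of Stokes' theorem, with \eqref{IPFT} using the periodicity in $x$ and \eqref{IPF1} reducing to a single integration by parts in $p$ that has already been carried out in the proof of the preceding lemma.

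For \eqref{IPFT}, I would expand $T$ in coordinates: since $T = v^{(I)}(p)X_{(I)}$ and $v^{(I)}$ depends only on $p$, we have $hTf = v^{(I)}(p)\, h\,\partial_{x^I}f$, and the constants $v^{(I)}(p)$ pull out of the $x$-integral. Then a standard integration by parts on the compact manifold without boundary $\mathbb{T}^N$ yields
\[
\int_{\mathbb{T}^N} h\,\partial_{x^I}f\,dx = -\int_{\mathbb{T}^N} f\,\partial_{x^I}h\,dx,
\]
with no boundary contribution. Summing over $I$ against $v^{(I)}(p)$ gives the claim.

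For \eqref{IPF1}, the cleanest route is to re-use the intermediate identity \eqref{temp} established in the proof of the previous lemma. Substituting the now-known expression $W = \partial_p \log u$, equivalently $Wf = g(\partial_p f, \partial_p \log u)$, causes the first integrand in \eqref{temp} to vanish pointwise, leaving exactly $\int_{\R^M} hLf\,d\mu = -\int_{\R^M} g(\partial_p f, \partial_p h)\,d\mu$. If one prefers a self-contained derivation, one can replay the same steps: write $d\mu = \Theta\, u\,\sqrt{|g|}\,dp$, expand $hLf = h\,\Delta_p f + hWf$, rewrite the first term using $\Delta_p f = \Div(\partial_p f)$ together with the product rule (ii) of Lemma~\ref{identities} applied to $\Div(h u\,\partial_p f)$, and apply Stokes' theorem \eqref{stokes} to discard the divergence. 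The $hWf$ contribution cancels against $h\,g(\partial_p f,\partial_p \log u)$ precisely because $W=\partial_p \log u$, and what remains is the claimed symmetric bilinear form in $\partial_p f, \partial_p h$.

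There is no real obstacle here; the only thing to keep an eye on is making sure the regularity and decay of $f,h$ are sufficient to justify Stokes on $\R^M$ (namely $u\,\partial_p f, u\,\partial_p h \in H^1(\R^M, \sqrt{|g|}\,dp)$ in the sense of \eqref{stokes}) and the vanishing of boundary terms on $\mathbb{T}^N$ is automatic by periodicity. Both are implicit in the smoothness assumption on $f,h$ together with the integrability built into $d\mu$.
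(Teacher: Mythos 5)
Your proposal is correct and matches the paper's own proof: the paper also dispatches \eqref{IPFT} as a straightforward integration by parts on the boundaryless torus, and obtains \eqref{IPF1} by substituting $W=\partial_p\log u$ into the intermediate identity \eqref{temp} so that the first integrand vanishes. Your optional self-contained rederivation and the remark on the regularity needed for \eqref{stokes} are consistent with, and slightly more explicit than, what the paper writes.
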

\begin{proof}
The proof of~\eqref{IPFT} is straightforward. The identity~\eqref{IPF1} follows by setting $W=\partial_p\log u$ in~\eqref{temp}. 
\end{proof}
For the next result we need to recall the definition of inner product of second order tensor fields. Given a type $(2,0)$ tensor field $R$ and a type $(0,2)$ tensor field $S$, the inner product $R\cdot S=S\cdot R$ is defined as
\[
R\cdot S=(R\otimes S)(P_*^{(i)}, P_*^{(j)},P_{(i)}, P_{(j)})\:.
\]
Componentwise this means $R\cdot S=R^{ij}S_{ij}$.
\begin{Lemma}
For all type $(2,0)$ tensor fields $A$ and $Z\in\X(\M)$ we have
\begin{equation}\label{IPF3}
\int_{\R^M}g(Z,\Div A)\,d\mu=-\int_{\R^M}A\cdot \nabla_p  Z_*\,d\mu-\int_{\R^M}A(W_*,Z_*)\,d\mu\:.
\end{equation} 
\end{Lemma}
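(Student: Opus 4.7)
The plan is to reduce~\eqref{IPF3} to the scalar Stokes-with-weight identity already used in the proof of~\eqref{IPF1}, by contracting $A$ with $Z_*$ to produce a vector field $V$ whose divergence packages both terms of the right-hand side.

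First I would introduce $V\in\X(\M)$ defined invariantly by $g(V,X)=A(Z_*,X_*)$ for every $X\in\X(\M)$; in local coordinates this is $V^i=A^{ji}Z_j$. A Leibniz computation with the Levi-Civita connection---either invariantly, or in coordinates using the explicit definitions of $\Div$ and of the tensor inner product recalled just before the statement---then gives the tensorial analogue of Lemma~\ref{identities}(ii),
\[
\Div V=g(\Div A,Z)+A\cdot\nabla_p Z_*\:.
\]

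Next I would integrate this identity against $d\mu$ and repeat the trick from the proof of~\eqref{IPF1}. Writing $d\mu=\Theta u\sqrt{|g|}\,dp$ and using Lemma~\ref{identities}(ii) together with $W=\partial_p\log u$, so that $\partial_p u=uW$, one has
\[
\Div(uV)=u\,\bigl(\Div V+g(W,V)\bigr)\:,
\]
and~\eqref{stokes} applied to $uV$ kills the left-hand side after multiplication by $\Theta\sqrt{|g|}\,dp$; therefore
\[
\int_{\R^M}\Div V\,d\mu=-\int_{\R^M}g(W,V)\,d\mu\:.
\]
Since $g(W,V)=A(Z_*,W_*)$ by the defining property of $V$ taken at $X=W$, substituting the Leibniz expansion of $\Div V$ produces~\eqref{IPF3} (after using the symmetry of $A$ in its two one-form arguments, which is the case in the applications, to exchange $Z_*$ and $W_*$).

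The only delicate point is the first step: the slot of $A$ into which $Z_*$ is inserted to define $V$ must be chosen consistently with the trace convention fixed by the paper's definition of $\Div A$, so that the Leibniz rule genuinely splits $\Div V$ into the clean pair $g(\Div A,Z)+A\cdot\nabla_p Z_*$ rather than a mixture of the two contractions of $\nabla_p A$. Once this bookkeeping is pinned down, everything else is exactly the Stokes-with-density argument already deployed in the preceding lemma.
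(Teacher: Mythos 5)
Your argument is essentially the paper's own proof: both contract $A$ with $Z_*$ to build an auxiliary vector field, apply the Leibniz rule to its divergence, and invoke the weighted Stokes formula, the only difference being that the paper absorbs the density $u$ into the auxiliary field from the outset (defining $Y(\cdot)=A(uZ_*,\cdot)$) rather than multiplying by $u$ afterward via Lemma~\ref{identities}(ii). Your closing remark about needing symmetry of $A$ to render the last term as $A(W_*,Z_*)$ rather than $A(Z_*,W_*)$ is apt; the paper's own Leibniz computation likewise produces $A(Z_*,W_*)$, and the lemma is only ever invoked for symmetric tensors (Hessians of $h$ and of $v^{(I)}$), so the discrepancy is harmless in context.
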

\begin{proof}
Consider the vector field $Y\in\X(\M)$ defined by $Y(\cdot)=A(uZ_*,\cdot)$. By the Leibnitz identity,
\[
\Div Y=g(uZ,\Div A)+A\cdot\nabla_p(uZ_*)\:.
\]
Replacing in the l.h.s. of~\eqref{IPF3} we obtain
\begin{align*}
\int_{\R^M}g(Z,\Div A)\,d\mu=&\ \Theta\int_{\R^M}g(uZ,\Div A)\sqrt{|g|}\,dp\\
=&-\Theta\int_{\R^M}A\cdot\nabla_p(uZ_*)\sqrt{|g|}\,dp\\
=&-\int_{\R^M}A\cdot (W_*\otimes Z_*)\,d\mu-\int_{\R^M}A\cdot\nabla_pZ_*\,d\mu\:,
\end{align*}
which is the claim.
\end{proof}

\section{Main result}\label{mainresult}
We begin by stating our assumptions on the functions $v^{(I)}, E$ and the metric $g$. Let us recall that the Bakry-Emery-Ricci tensor is defined by
\begin{equation}\label{BEricci}
\widetilde{\Ric}=\Ric-\nabla_p  W_*=\Ric-\nabla^2_p\log u\:,
\end{equation}
where $\nabla^2_p f$ denotes the Hessian of $f$ and $u$ is the function~\eqref{u}. 
As already mentioned in the Introduction, Barky and Emery proved in~\cite{BE,B} that spatially homogeneous solutions of~\eqref{maineq} converge exponentially fast in time to the equilibrium state $h_\infty\equiv 1$ in the entropic sense (i.e., the entropy functional decays exponentially) if the tensor $\widetilde{\Ric}$ is bounded below by a constant times the metric $g$. In the spatially inhomogeneous case, we also need a bound on $\widetilde{\Ric}$ from above.  
\begin{assumption}\label{cbcass}
There exist two constants $\sigma_2\geq \sigma_1\geq 0$ such that
\begin{equation}\label{curbound}
\sigma_1 g(X,X)\leq\widetilde{\Ric}(X,X)\leq \sigma_2 g(X,X)\:,\quad\text{ for all }X\in\X(\M)\:.
\end{equation}
\end{assumption}
We denote 
\begin{equation}\label{sigma}
\sigma=\sigma_2-\sigma_1\geq 0\:.
\end{equation}
\begin{Remark}
An important example is when $\sigma=0$. In this case there exists $\lambda\in\R$ such that $\widetilde{\Ric}=\lambda g$. Solutions of the latter equation are called {\it Ricci solitons}. They play a fundamental role in the analysis of the Ricci flow heat equation, see~\cite{chow}.
\end{Remark}
Before stating the next assumption, it is convenient to give the following definition.
\begin{Definition}\label{modifiedgradient}
Given a real valued function $f$ on $\N\times\M$, and a point $x\in\N$, we denote $\mathscr{A}_xf$ the vector field over $\M$ given by
\[
\mathscr{A}_x f=(\partial_{x^I}f)\partial_p v^{(I)}\:,\quad\text{evaluated at $x\in\N$}.
\]
The metrically equivalent one form field is given by $(\mathscr{A}_x f)_*=(\partial_{x^I}f)\nabla_p  v^{(I)}$.
\end{Definition}

We emphasize that $\mathscr{A}_xf \in\X(\M)$. (More precisely, $\mathscr{A}_xf\in\X(\{x\}\times\M)\simeq\X(\M)$.) Its components in the vector fields basis $P_{(i)}$ are given by 
\[
(\mathscr{A}_xf)^i=g^{ij}{\partial_{p^j}}v^{(I)}\partial_{x^I}f\:.
\] 
For the Fokker-Planck equation~\eqref{kineticFP2}, the manifold $\M$ can be identified with the tangent space at all points $x\in\N$ and $\mathscr{A}_x f$ coincides with $\partial_x f$, the gradient in $x$ of $f$.    

Now let us define a symmetric bilinear form $A$ on $\R^N\times\R^N$ by
\[
A(\xi,\eta)=A^{IJ}\xi_I\eta_J\:,\quad A^{IJ}=g(\partial_p v^{(I)},\partial_p  v^{(J)})\:,\qquad \xi,\eta\in \R^N\:.
\]
Note that 
\begin{equation}\label{gobs}
A^{IJ}\partial_{x^I}h\partial_{x^J}h=g(\mathscr{A}_x h,\mathscr{A}_x h)\:.
\end{equation}
\begin{assumption}\label{gpos} We assume that $A$ is positive definite, 
\[
A(\xi,\xi)>0\:,\quad \text{ for all } 0\neq\xi\in\R^N\:.
\]
\end{assumption}
In the next assumption we require the validity of a (weighted) logarithmic Sobolev inequality in both variables $(x,p)$. Precisely, define the entropy functional
\begin{equation}\label{entropydef}
\mathcal{D}[h]=\int_{\mathbb{T}^N\times\R^M}h\log h\,dx\,d\mu\:,
\end{equation}
and 
\begin{subequations}\label{ixxipp}
\begin{align}
&\I_{xx}[h]=\int_{\mathbb{T}^N\times\R^M}\frac{g(\mathscr{A}_xh,\mathscr{A}_xh)}{h}\,dx\,d\mu\:,\\
&\I_{pp}[h]=\int_{\mathbb{T}^N\times\R^M}\frac{g(\partial_ph,\partial_ph)}{h}\,dx\,d\mu\:.
\end{align}
\end{subequations}
\begin{assumption}\label{logsobineqass}
We assume that for all smooth functions $h:\N\times\M\to (0,\infty)$, such that 
\[
\int_{\mathbb{T}^N\times\R^M}h\,dx\,d\mu=1\:,
\]
there exists $\alpha>0$ such that the following inequality holds:
\begin{equation}\label{logsobin}
\mathcal{D}[h]\leq \frac{1}{2\alpha}\big(\I_{xx}[h]+\I_{pp}[h]\big)\:.
\end{equation}
\end{assumption}
Logarithmic Sobolev inequalities are extensively studied in the literature and several criteria for their validity have been found. In Appendix we give a sufficient condition for the validity of~\eqref{logsobin}, which, in the spirit of our approach, takes the form of a geometric inequality on an auxiliary metric defined on the product manifold $\N\times\M$. Moreover, as already mentioned in the Introduction, the lower bound in~\eqref{curbound} for the Bakry-Emery-Ricci tensor implies that~\eqref{logsobin} holds when $h$ is independent of $x$ (i.e., when $\I_{xx}=0$).

The previous assumptions suffice if the metric $g$ and the velocity field $v$ are such that $\nabla^2_pv^{(I)}=0$. If this is not the case we need more assumptions, which we give after the following definitions.

\begin{Definition}
Given a real valued function $f$ on $\N\times\M$, and a point $x\in\N$, we denote $\mathscr{B}_xf$ the vector field over $\M$ given by
\[
\mathscr{B}_xf=(\partial_{x^I}f)\Div\partial_p^2 v^{(I)}\:,\quad\text{evaluated at $x\in\N$}.
\]
\end{Definition}
\begin{Definition}
Given a real valued function $f$ on $\N\times\M$, and a point $x\in\N$, we denote $\mathscr{C}_xf$ the type (2,0) tensor field over $\M$ given by
\[
\mathscr{C}_xf=(\partial_{x^I}f)\partial_p^2 v^{(I)}\:,\quad\text{evaluated at $x\in\N$}.
\]
The metrically equivalent type (0,2) tensor field is $(\mathscr{C}_xf)_*=(\partial_{x^I}f)\nabla_p^2  v^{(I)}$.
\end{Definition}
Next let $B, C$ denote the symmetric bilinear forms on $\R^N\times \R^N$ given by
\begin{align*}
&B(\xi,\eta)=B^{IJ}\xi_I\eta_J\:,\quad B^{IJ}=g(\Div\partial_p ^2v^{(I)}, \Div\partial_p ^2v^{(J)})\:,\\
&C(\xi,\eta)=C^{IJ}\xi_I\eta_J\:,\quad C^{IJ}=\partial_p ^2v^{(I)}\cdot\nabla_p ^2 v^{(J)}\:,
\end{align*}
and observe that
\begin{equation}\label{ABobs}
B^{IJ}\partial_{x^I}h\partial_{x^J}h=g(\mathscr{B}_xh,\mathscr{B}_xh)\:,\quad C^{IJ}\partial_{x^I}h\partial_{x^J}h=\mathscr{C}_xh\cdot(\mathscr{C}_x h)_*\:
\end{equation}
\begin{assumption}\label{ABass}
We assume that there exist two constants $\beta,\gamma\geq0$ such that
\[
B(\xi,\xi)\leq \beta A(\xi,\xi)\:,\qquad C(\xi,\xi)\leq \gamma A(\xi,\xi)\:,\quad\text{for all }\xi\in\R^N\:.
\] 
\end{assumption}
The previous assumptions suffice if $W$ lies in the kernel of the Hessian matrix of $v^{(I)}$, i.e., $\nabla^2_pv^{(I)}(W,\cdot)=0$, for all $I=1,\dots,N$. If this is not the case, we need a last assumption. Define the vectors $K^{(1)},\dots K^{(N)}$ by
\[
K^{(I)}(\cdot)=\partial^2_pv^{(I)}(W_*,\cdot)\:,
\]
i.e., componentwise, 
\[
K^{(I)}_i=(\nabla_p^2 v^{(I)})_{ij}W^j\:.
\]
Let $R$ denote the bilinear form on $\R^N\times\R^N$ given by
\[
R(\xi,\eta)=R^{IJ}\xi_I\eta_J\:,\quad R^{IJ}=g(K^{(I)},K^{(J)})\:.
\]
\begin{assumption}\label{Wass}
We assume that there exists a constant $\omega>0$ such that
\begin{equation}\label{boundW}
R(\xi,\xi)\leq \omega A(\xi,\xi)\:,\quad\text{for all }\xi\in\R^N.
\end{equation}
\end{assumption}
\begin{Remark}
For the Fokker-Planck equation~\eqref{kineticFP2} we have the following identifications: all indexes (small and capital) run from 1 to $N=M$ and
\begin{align*}
&g_{ij}=\delta_{ij}\:,\quad v^{(I)}(p)=p^I\:,\quad E=|p|^2/2\:;\\
& W^i=-p^i\:,\quad\widetilde{\Ric}_{ij}=\delta_{ij}\:,\quad \partial_p^2v^{(I)}=0\:;\\
&\mathscr{A}_xf=\partial_xf\:,\quad\mathscr{B}_xf=\mathscr{C}_xf=0\:,\quad\text{for all functions $f$}\:;\\
&A_{IJ}=\delta_{IJ}\:,\quad B=C=R=0\:.
\end{align*}
The constants in the assumptions can be chosen as $\sigma_1=\sigma_2=\alpha=1,\beta=\gamma=\omega=0$. A much less obvious example of model satisfying all the previous assumptions is given in the next section.
\end{Remark}

Before any claim on the asymptotic time behavior of solutions could be made, one has to ensure that the Cauchy problem for~\eqref{maineq} is globally well-posed. %
In  Appendix A we prove a global existence and uniqueness theorem in the $L^1$ setting, namely 
\[
0\leq h\in C([0,\infty);L^1(dx\,d\mu))\:;
\]
we restrict to the case when the dimensions of $\N$ and $\M$ are the same, i.e., $N=M$ (which is the most interesting case for the applications) and the metric components grow slower  than $|p|^2$ at infinity. The core of the proof is a generalization of the argument in~\cite[Ch. 5]{helffer}, which consists in using the hypoellipticity of the Fokker-Planck operator to prove that it is the generator of a dissipative semigroup.
We remark that it is possible to prove global well-posedness of the Cauchy problem in a much larger class (see the results in~\cite{hypo} for the classical Fokker-Planck equation~\eqref{kineticFP}), however this is beyond the purpose of this paper.   
In the following we assume that the initial datum $h_\mathrm{in}$ belongs to the space $L^1\cap L\log L(dx\,d\mu)$, that $\I_{xx}[h_\mathrm{in}]+\I_{pp}[h_\mathrm{in}]$ is bounded---see~\eqref{ixxipp}---and that $h_\mathrm{in}$ is normalized to a probability distribution:
\begin{equation}\label{normone}
\|h_\mathrm{in}\|_{L^1(dxd\mu)}:=\int_{\mathbb{T}^N\times \R^M} h_\mathrm{in}\vol=1\:.
\end{equation} 
Our main result is the following.
\begin{Theorem}\label{maintheo}
Let the Assumptions~\ref{cbcass}--\ref{Wass} be verified (Assumptions~\ref{cbcass}--\ref{ABass} suffice when $\nabla_p^2v^{(I)}(W,\cdot)\equiv 0$ and Assumptions~\ref{cbcass}--\ref{logsobineqass} suffice when $\nabla_p^2v^{(I)}\equiv 0$) and let the initial datum satisfy the aforementioned properties. There exists two constants $C>0$, $\lambda>0$, depending on the parameters $\sigma_1,\sigma_2,\omega,\alpha,\beta,\gamma$, and which can be explicitly computed, such that the entropy functional~\eqref{entropydef}
satisfies 
\begin{equation}\label{entropydecay}
\mathcal{D}[h](t)\leq C(\I_{xx}[h_\mathrm{in}]+\I_{pp}[h_\mathrm{in}]) e^{-\lambda t}\:.
\end{equation}
\end{Theorem}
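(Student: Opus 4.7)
The plan is to implement the modified entropy strategy already outlined in the Introduction, working entirely within the Riemannian formalism of Section~\ref{setup} to avoid coordinate-heavy commutator gymnastics. I would proceed as follows.

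First, I would compute, along smooth solutions of~\eqref{maineq}, the four time derivatives
\[
\frac{d}{dt}\D[h], \qquad \frac{d}{dt}\I_{pp}[h], \qquad \frac{d}{dt}\I_{xp}[h], \qquad \frac{d}{dt}\I_{xx}[h],
\]
using only the integration-by-parts identities~\eqref{IPFT}, \eqref{IPF1}, \eqref{IPF3} and the Bochner-type identity Lemma~\ref{identities}(v). The first one is the standard $-\I_{pp}[h]$. For the second, I would rewrite $\I_{pp}[h]=\int g(\partial_p h,\partial_p\log h)\,dx\,d\mu$ and apply Lemma~\ref{identities}(v) together with the definition of the Bakry-Emery-Ricci tensor~\eqref{BEricci}; the transport term $-Tf$ generates, after a $p$-derivative and an integration by parts, exactly the cross-quantity $-2\I_{xp}[h]$, while the $\Delta_p+W$-part produces $-2\int \widetilde{\Ric}(\partial_p\log h,\partial_p\log h)\,h\,dx\,d\mu$ plus a squared-Hessian term. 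The same computation applied to $\I_{xp}$ generates the key term $-\I_{xx}[h]$ (through $[\mathscr{A}_x,T]$-type algebra) plus geometric remainders involving $\mathscr{B}_xh$, $\mathscr{C}_xh$ and $K^{(I)}$, which is precisely why Assumptions~\ref{ABass}--\ref{Wass} appear. Finally, differentiating $\I_{xx}[h]$ produces no new $\I_{\ast\ast}$ quantity, only geometric remainders, again controlled by Assumption~\ref{ABass}.

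Second, I would assemble these four identities into a single inequality for
\[
\mathcal{E}[h]=k\D[h]+a\I_{pp}[h]+2b\I_{xp}[h]+c\I_{xx}[h].
\]
The Bakry-Emery-Ricci lower bound $\sigma_1$ gives a useful negative contribution in $d\I_{pp}/dt$; the upper bound $\sigma_2$ (and hence $\sigma$ in~\eqref{sigma}) controls bad signs that arise in $d\I_{xp}/dt$ from $\nabla_p W_*$. The squared-Hessian term coming from Lemma~\ref{identities}(v) is nonpositive and I would either discard it or keep a small fraction to absorb cross-terms. All remaining mixed quantities (for instance $\int g(\mathscr{A}_xh,\partial_p^2\log h\cdot\partial_p\log h)\,h\,dx\,d\mu$ and the $\mathscr{B}_x,\mathscr{C}_x,K^{(I)}$ terms) would be estimated by Cauchy-Schwarz followed by Young's inequality of the form $2xy\leq \varepsilon x^2+\varepsilon^{-1}y^2$, so that they split into fractions of $\I_{pp},\I_{xp},\I_{xx}$ weighted by $\beta,\gamma,\omega,\sigma$. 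Positive definiteness of $A$ (Assumption~\ref{gpos}) together with~\eqref{ABobs} and~\eqref{gobs} is what lets one convert $A$-bounded quantities into $\I_{xx}$ bounds. Young's inequality also forces the side constraint $b^2<ac$ to guarantee $\mathcal{E}[h]\geq k\D[h]+\tfrac12 a\I_{pp}[h]+\tfrac12 c\I_{xx}[h]$.

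Third, with the constants $k,a,b,c$ still free, I would choose them (depending only on $\sigma_1,\sigma_2,\omega,\alpha,\beta,\gamma$) to produce an inequality of the schematic form
\[
\frac{d}{dt}\mathcal{E}[h]\leq -\kappa_1\I_{pp}[h]-\kappa_2\I_{xx}[h]
\]
with $\kappa_1,\kappa_2>0$. Applying the log-Sobolev inequality~\eqref{logsobin} of Assumption~\ref{logsobineqass} gives $\D[h]\leq(2\alpha)^{-1}(\I_{xx}[h]+\I_{pp}[h])$, and combining this with $\mathcal{E}[h]\leq k\D[h]+a\I_{pp}[h]+(b+c)\I_{xx}[h]+b\I_{pp}[h]$ yields $\mathcal{E}[h]\leq C'(\I_{pp}[h]+\I_{xx}[h])$, hence
\[
\frac{d}{dt}\mathcal{E}[h]\leq -\lambda\,\mathcal{E}[h]
\]
for some explicit $\lambda>0$. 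Grönwall's inequality then gives $\mathcal{E}[h](t)\leq\mathcal{E}[h_\mathrm{in}]e^{-\lambda t}$, and the lower bound $\mathcal{E}[h]\geq k\D[h]$ together with the estimate $\mathcal{E}[h_\mathrm{in}]\leq C(\I_{xx}[h_\mathrm{in}]+\I_{pp}[h_\mathrm{in}])$ (using Young on the $\I_{xp}$ term and~\eqref{logsobin} on $\D[h_\mathrm{in}]$) yields~\eqref{entropydecay}.

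The main obstacle is the bookkeeping in the second step: the derivatives of $\I_{xp}$ and $\I_{xx}$ produce many geometric cross-terms (involving $\nabla_p v^{(I)}$, $\nabla_p^2 v^{(I)}$, $\Div\partial_p^2v^{(I)}$, $\nabla_p W_*$ and second derivatives of $\log h$), and one must verify that every such term is either manifestly nonpositive, absorbable into the negative $\I_{pp},\I_{xx}$ reservoir via Young's inequality, or controlled by $\beta,\gamma,\omega,\sigma$ through Assumptions~\ref{ABass}--\ref{Wass}. The delicate point is matching the smallness parameters in Young's inequality consistently across all four derivatives so that a \emph{single} choice of $a,b,c,k$ works simultaneously; the constraint $\sigma_1\geq 0$ (rather than just $\sigma_1>0$) also means that one cannot rely on $\widetilde{\Ric}$ alone and must genuinely use the $x$-contribution carried by $\I_{xx}$.
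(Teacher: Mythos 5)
Your proposal follows essentially the same route as the paper: the four time-derivative identities for $\D,\I_{pp},\I_{xp},\I_{xx}$ (Lemmas~\ref{dtentropy}--\ref{dixx}), Young's inequality to absorb the geometric remainders and the constraint $b\leq\sqrt{ac}$ (Lemma~\ref{lowerbound}), the differential inequality $\frac{d}{dt}\mathcal{E}\leq -d(\I_{pp}+\I_{xx})$ (Proposition~\ref{upperboundpro}), and finally the log-Sobolev inequality plus Gr\"onwall. One small precision: the squared-Hessian terms $\int h\,\partial_p^2\h\cdot\nabla_p^2\h$ and $\int h\,\partial_p(\mathscr{A}_x\h)\cdot\nabla_p(\mathscr{A}_x\h)_*$ cannot simply be discarded, since they are needed in full force to absorb the positive squared-Hessian contributions that Young's inequality produces from the $-2\int h\,\partial_p^2\h\cdot\nabla_p(\mathscr{A}_x\h)_*$ and $\mathscr{C}_x$ terms in $d\I_{xp}/dt$; you correctly list keeping them as the alternative, which is the one that works.
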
  

\begin{Remark}
Using~\eqref{CKineq} we have 
\[
\|h-1\|_{L^1(dxd\mu)}=O(e^{-\lambda\,t/2})\:,\quad\text{ as $t\to\infty$}\:.
\]
Equivalently, the solution of~\eqref{kineticFP2} converges to the steady state $f_\infty\sim e^{-E}$ with exponential rate in the $L^1$ norm.
\end{Remark}

\begin{Remark}
Of course there is no loss of generality in restricting to initial data that satisfy~\eqref{normone}, since~\eqref{maineq} preserves the $L^1(dxd\mu)$-norm and is invariant by the rescaling $h\to Mh$. Solutions with mass $M>0$ converge to the equilibrium state $h_\infty=M$.
\end{Remark}

\begin{Remark}
It should be emphasized that the entropy method is not suitable to obtain the optimal constants $C,\lambda$ for which~\eqref{entropydecay} holds and in some cases may lead to an exponential rate very far from the real one, see~\cite{vil}. The hypoelliptic techniques used in~\cite{Herau} are better for this purpose. 
\end{Remark}
The proof of the Theorem~\ref{maintheo} is to be found in Section~\ref{proof}. In the next section we show that the result applies to an important physical model: the relativistic Fokker-Planck equation.

\section{Application to the relativistic Fokker-Planck equation}\label{appl}
The relativistic kinetic Fokker-Planck equation is obtained from~\eqref{kineticFP2} by setting 
\begin{equation}\label{relFP}
v(p)=\frac{p}{\sqrt{1+|p|^2}}\:,\quad D=\frac{\mathbb{I}+p\otimes p}{\sqrt{1+|p|^2}}\:,\quad E=\theta\sqrt{1+|p|^2}\:.
\end{equation}
We restrict to the three dimensional problem: $x\in\mathbb{T}^3$, $p\in\R^3$ (thus capital and small Latin indexes run both from 1 to 3 in this section).
$v(p)$ is the relativistic velocity, $D$ is the relativistic diffusion matrix and $\sqrt{1+|p|^2}$ is the relativistic energy. We set the rest mass of the particles and the speed of light equal to one. The equilibrium state is given by the J\"uttner distribution
\begin{equation}\label{juttner}
\mathcal{J}_\theta(p)=Ze^{-\theta\sqrt{1+|p|^2}}\:,
\end{equation}
where $Z$ is a constant (fixed by the mass of the system) and $\theta$ is a positive parameter which, up to a dimensional constant, coincides with $1/T$, where $T$ is the temperature of the surrounding bath in which the particles are moving. Although the interest in the relativistic Fokker-Planck equation has increased substantialy in recent years~\cite{AC0,DH2,H1}, the relativistic theory of Brownian motions is an old classical topic~\cite{Du}. In this section we prove that solutions of the relativistic Fokker-Planck equation converge with exponential rate to the J\"uttner equilibrium, provided the parameter $\theta$ is sufficiently large, i.e., for a sufficiently small  temperature. We do so by showing that all the assumptions of Theorem~\ref{maintheo} are satisfied. To this purpose we first normalize the solution by introducing $h=fe^E$ and rewrite the relativistic Fokker-Planck equation in the form~\eqref{maineq}, where the metric $g$ is given by
\[
g_{ij}=p_0(\delta_{ij}-\frac{p_ip_j}{p_0^2})\:,\qquad p_0=\sqrt{1+|p|^2}\:.
\]
Note that $g$ is conformal to the hyperbolic metric $\delta_{ij}-p_ip_j/p_0^2$ and that the matrix inverse of $g_{ij}$ is 
\[
g^{ij}=\frac{1}{p_0}(\delta^{ij}+p^ip^j)\:,
\]
where the indexes of the variables $p_i$ are raised and lowered with the Euclidean matrix, e.g., $p_i=\delta_{ij}p^j$.
Since $\det g= p_0$, the function $u=e^{-E}/\sqrt{\det g}$ is given by
\[
u=\frac{e^{-\theta p_0}}{\sqrt{p_0}}\:.
\]
We begin by showing the validity of Assumptions~\ref{gpos} and~\ref{ABass}, which are independent of the energy  function $E$. A straightforward calculation shows that the bilinear form $A^{IJ}$ is given by
\[
A^{IJ}=g(\partial_pv^{(I)},\partial_pv^{(J)})=g^{ij}\partial_{p^i}v^{(I)}\partial_{p^j}v^{(J)}=\frac{1}{p_0^3}(\delta^{IJ}-\frac{p^Ip^J}{p_0^2})\:.
\]
Thus Assumption~\ref{gpos} holds, because
\begin{equation}\label{estAw}
A(\xi,\xi)=\frac{1}{p_0^3}(|\xi|^2-\frac{(p\cdot\xi)^2}{p_0^2})\geq \frac{1}{p_0^3}|\xi|^2(1-\frac{|p|^2}{p_0^2})=\frac{|\xi|^2}{p_0^5}>0\:,
\end{equation}
for all $0\neq\xi\in\R^3$.
The bilinear form $B^{IJ}$ and $C^{IJ}$ are given by\footnote{The remaining calculations in this section have been carried out with MATHEMATICA.}
\begin{align*}
&B^{IJ}=\frac{496 p_0^6 - 9030 p_0^4 + 1035 p_0^2 - 25}{16 p_0^{13}} \delta^{IJ}\\
&\qquad\ \,+ \frac{25 - 1035 p_0^2 + 10551 p_0^4 + 1610 p_0^6 + 729 p_0^8}{16 
     p_0^{10}}A^{IJ}\:,\\
&C^{IJ}=\frac{9}{4p_0^6}\delta^{IJ}+\frac{9(2p_0^2-3)}{4p_0^3}A^{IJ}\:.
\end{align*}
Since, by~\eqref{estAw}, 
\begin{equation}\label{estA}
|\xi|^2\leq p_0^5A(\xi,\xi)
\end{equation} 
holds, we have
\[
B(\xi,\xi)\leq \frac{P_8(p_0)}{p_0^{10}}A(\xi,\xi)\:,
\]
where $P_8(p_0)$ is a polynomial of degree 8. Thus it is clear that there exists $\beta>0$ such that $B(\xi,\xi)\leq \beta A(\xi,\xi)$ and by the same argument, there exists $\gamma>0$ such that  $C(\xi,\xi)\leq\gamma A(\xi,\xi)$.
We conclude that Assumption~\ref{ABass} is satisfied as well. As to Assumption~\ref{Wass}, the bilinear form $R^{IJ}$ is given by
\[
R^{IJ}=\frac{(1 + 2 \theta p_0)^2}{16 p_0^9} (16 (p_0^2 - 1)\delta^{IJ} + 
   p_0^3 (9 p_0^4 - 34 p_0^2 + 25)A^{IJ})
\]
and arguing as before it is easy to prove that $R(\xi,\xi)\leq\omega A(\xi,\xi)$, for some positive $\omega$.
Let us now take care of Assumption~\ref{cbcass}. The Ricci tensor of $g$ and the Hessian of $\log u$ are given by
\[
\mathrm{Ric}_{ij}=\frac{1}{4 p_0^2} (3\delta_{ij} - \frac{4 + 15 p_0^2}{p_0} g_{ij})\:,
\]
\[
(\nabla^2_p\log u)_{ij}=\frac{1}{4 p_0^2} ((4 + 4 \theta p_0)\delta_{ij} - \frac{3 + 3 p_0^2 + 2 \theta p_0(1 + 3 p_0^2)}{p_0}
  g_{ij})\:.
\]
Therefore the Bakry-Emery-Ricci tensor reads
\[
\widetilde{\mathrm{Ric}}_{ij}=\frac{1}{4 p_0^2} (-(1 + 4 \theta p_0)\delta_{ij} + 
   \frac{6 \theta p_0^3 - 12 p_0^2 + 2 \theta p_0 - 1}{p_0}g_{ij})\:.
\]
It is straightforward that the bound from above in~\eqref{curbound} is satisfied, for all $\theta>0$. However the lower bound is satisfied if and only if $\theta$ is sufficiently large. To see this we use that 
\begin{equation}\label{estg}
|X|^2\leq p_0g(X,X)\:,\quad\text{ for all $X\in\R^3$}\:, 
\end{equation}
whence
\[
\widetilde{\mathrm{Ric}}(X,X)\geq \left[\frac{2\theta p_0^3-13p_0^2+2\theta p_0-1}{4p_0^3}\right] g(X,X)
\]
and the minimum of the function on square brackets is strictly positive if and only if $\theta$ is sufficiently large (e.g., $\theta\geq 4$ suffices).
It remains to check the validity of the logarithmic Sobolev inequality~\eqref{logsobin}, where in this case
\begin{align*}
&\I_{xx}[h]=\int_{\mathbb{T}^3\times\R^3}\frac{1}{p_0^3h}\left(|\nabla_x h|^2-\frac{|p\cdot\nabla_xh|^2}{p_0^2}\right)dx\,d\mu\:,\\
&\I_{pp}[h]=\int_{\mathbb{T}^3\times\R^3}\frac{1}{p_0h}\left(|\nabla_ph|^2+|p\cdot\nabla_ph|^2\right)dx\,d\mu\:.
\end{align*}
To prove~\eqref{logsobin} we use Theorem~\ref{sufflogsob} in Appendix B. We introduce the metric
\[
G=G_{IJ}dx^Idx^J+g_{ij}dp^idp^j\:,
\]
where $G_{IJ}=A_{IJ}=p_0^3(\delta_{IJ}+p_Ip_J)$ is the inverse matrix of $A^{IJ}$, 
the function
\[
U=\frac{u}{\sqrt{\det(A_{IJ})}}=\frac{u}{p_0^{11/2}}
\]
and check that there exists a constant $\alpha>0$ such that
\begin{equation}\label{curboundcondG}
\mathrm{Ric}^G(Z,Z)-\nabla_p^2\log U(Z,Z)\geq\alpha G(Z,Z)\:.
\end{equation}
The Ricci tensor of $G$ is given by
\begin{align*}
&\mathrm{Ric}^G_{IJ}=\frac{13}{2} p_0^2\delta_{IJ} - \frac{19 p_0^2 - 7}{p_0^3}G_{IJ}\:,\\
&\mathrm{Ric}^G_{iJ}=0\:,\\
&\mathrm{Ric}^G_{ij}=\frac{3}{2 p_0^2}\delta_{ij} - \frac{25 p_0^2 - 3}{2 p_0^3}g_{ij}\:.
\end{align*}
The Hessian of $\log U$ is
\begin{align*}
&(\nabla^2\log U)_{IJ}=\frac{23+2\theta p_0}{4}(2p_0^2\delta_{IJ}-\frac{5p_0^2-2}{p_0^3}G_{IJ})\:,\\
&(\nabla^2\log U)_{iJ}=0\:,\\
&(\nabla^2\log U)_{ij}=\frac{23+\theta p_0}{p_0^2}\delta_{ij}-\frac{6\theta p_0^3+69p_0^2+2\theta p_0+69}{4p_0^3}g_{ij}.
\end{align*}
Using the previous formulas, the bound~\eqref{estg} and
\begin{equation}\label{estA2}
|X|^2\leq \frac{G_{IJ}X^IX^J}{p_0^3}\:,\quad\text{for all $X\in\R^3$},
\end{equation}
it is straightforward to prove that the curvature bound~\eqref{curboundcondG},  and thus the logarithmic Sobolev inequality~\eqref{logsobin}, holds when the constant $\theta$ is sufficiently large. In conclusion all the assumptions of Theorem~\ref{maintheo} are satisfied for the relativistic Fokker-Planck equation provided the constant $\theta$ is large enough, or, equivalently, the temperature of the surrounding bath is sufficiently small. Therefore Theorem~\ref{maintheo} yields the following result.
\begin{Theorem}
Let $0\leq f_\mathrm{in}$ be an initial datum of mass $M>0$ for the the relativistic Fokker-Planck equation, i.e., for~\eqref{kineticFP2} with~\eqref{relFP} substituted in. Denote by $\mathcal{J}_{\theta,M}$ the J\"uttner distribution~\eqref{juttner} with mass $M$. Then there exists $\theta_0>0$ such that for all $\theta\geq\theta_0$ there exists two positive constants $C,\lambda$, depending on $\theta$, such that the solution $f$ of the relativistic Fokker-Planck equation satisfies
\[
\|f-\mathcal{J}_{\theta,M}\|_{L^1}\leq C e^{-\lambda t}.
\]
\end{Theorem}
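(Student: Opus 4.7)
The plan is to obtain the theorem as a direct corollary of Theorem~\ref{maintheo}, the central work being the verification that the geometric data associated with~\eqref{relFP} satisfies the assumptions of that theorem once $\theta$ is sufficiently large. First I would recast the problem in the canonical form~\eqref{maineq} by setting $h=f/f_\infty$, where $f_\infty$ is a J\"uttner distribution of mass $M$; this gives the metric $g_{ij}=p_0(\delta_{ij}-p_ip_j/p_0^2)$, energy $E=\theta p_0$, and scalar velocity components $v^{(I)}(p)=p^I/p_0$ displayed in Section~\ref{appl}, and by the scale-invariance remark following Theorem~\ref{maintheo} it suffices to treat $\|h_\mathrm{in}\|_{L^1(dx\,d\mu)}=1$.

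The central effort is the verification of Assumptions~\ref{cbcass}--\ref{Wass} for this data, which is exactly what the computations preceding the theorem statement carry out. Assumptions~\ref{gpos}, \ref{ABass}, and~\ref{Wass} follow for every $\theta>0$ from the explicit formulas for $A^{IJ},B^{IJ},C^{IJ},R^{IJ}$ combined with the pointwise lower bound $A(\xi,\xi)\geq|\xi|^2/p_0^5$. The parameter $\theta$ enters only in Assumption~\ref{cbcass}, where the coefficient $(2\theta p_0^3-13p_0^2+2\theta p_0-1)/(4p_0^3)$ must be bounded below by a positive constant uniformly in $p_0\geq 1$, and in Assumption~\ref{logsobineqass}, which I would establish by invoking the criterion from Appendix~\ref{logproof} with the product metric $G=G_{IJ}dx^Idx^J+g_{ij}dp^idp^j$, $G_{IJ}=p_0^3(\delta_{IJ}+p_Ip_J)$, and weight $U=u/p_0^{11/2}$; the required Bakry--Emery curvature condition for the pair $(G,U)$ reduces, after using $|X|^2\leq p_0\,g(X,X)$ and $|X|^2\leq G_{IJ}X^IX^J/p_0^3$, to one-variable polynomial inequalities in $p_0$ whose coefficients are linear in $\theta$. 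Choosing $\theta_0$ so that both sets of polynomial inequalities hold simultaneously for all $p_0\geq 1$ and all $\theta\geq\theta_0$ provides the required threshold.

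With the assumptions in place, Theorem~\ref{maintheo} furnishes constants $C,\lambda>0$ depending on $\theta$ with $\mathcal{D}[h](t)\leq Ce^{-\lambda t}$; applying the Csisz\'ar--Kullback inequality~\eqref{CKineq} with respect to the product measure $dx\,d\mu$ yields $\|h-1\|_{L^1(dx\,d\mu)}\leq\sqrt{2C}\,e^{-\lambda t/2}$, and undoing the substitution $f=h\cdot f_\infty$ converts this into $\|f-\mathcal{J}_{\theta,M}\|_{L^1(dx\,dp)}\lesssim e^{-\lambda t/2}$, which is the conclusion after relabelling $\lambda/2$ as $\lambda$. The genuine obstacle in this plan is the compatibility of the two $\theta$-dependent thresholds: one must choose $\theta_0$ large enough that both the Bakry--Emery--Ricci bound on $(\M,g)$ and the auxiliary curvature bound on the product manifold $(\N\times\M,G)$ hold with the same constant, and making these two conditions quantitatively explicit and simultaneous is the one place where the verification requires more than mechanical bookkeeping.
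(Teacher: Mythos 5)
Your proposal is correct and follows essentially the same route as the paper: recast the relativistic equation in the canonical form~\eqref{maineq}, verify Assumptions~\ref{gpos}, \ref{ABass}, \ref{Wass} for all $\theta>0$ from the explicit formulas for $A^{IJ},B^{IJ},C^{IJ},R^{IJ}$ together with the bound $A(\xi,\xi)\geq|\xi|^2/p_0^5$, identify the lower curvature bound in Assumption~\ref{cbcass} and the Bakry--Emery condition on $(G,U)$ from Appendix~\ref{logproof} as the two $\theta$-dependent obstructions, and then conclude via Theorem~\ref{maintheo} and the Csisz\'ar--Kullback inequality~\eqref{CKineq}. You also correctly flag that the nontrivial bookkeeping lies in choosing $\theta_0$ so that both curvature bounds hold simultaneously, which is precisely where the paper's verification concentrates.
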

\begin{Remark}
In~\cite{AC1} the precise values of the constants $C,\lambda,\theta_0$ are found in the case of spatially homogeneous (SH) solutions. Moreover it is shown that, at least within the class of SH solutions, exponential convergence to equilibrium holds for all $\theta>0$ in a suitable weighted $L^2$ norm. 
\end{Remark}
\section{Proof of the main result}\label{proof}

In the rest of the paper the following abbreviations will be used:
\[
\int \cdots\, dx\,d\mu=\int_{\,\mathbb{T}^N\times\R^M}\cdots\, dx\,d\mu\:
\]
and 
\[
\h=\log h\:.
\]
Moreover the measure $dx\,d\mu$ will be omitted in the proofs.

In the following we assume that $h$ is a positive smooth solution of~\eqref{maineq}. The proof of the  result for $L^1$ non-negative solutions with finite entropy is obtained by a standard approximation argument, see~\cite{bdol, CLR} for examples of this procedure. 

Recall that
\begin{align*}
&\mathcal{D}[h]=\int h\,\h\,dx\,d\mu\:,\\
&\I_{pp}[h]=\int g(\partial_p h,\partial_p \h)\,dx\,d\mu\:,\\
&\I_{xx}[h]=\int  g(\mathscr{A}_x h,\mathscr{A}_x \h)\,dx\,d\mu
\end{align*}
and define the mixed derivatives term
\[
\I_{xp}[h]=\int g(\mathscr{A}_x h,\partial_p\h)\,dx\,d\mu\:.
\]
Given four constants $a,b,c, k>0$, we define the modified entropy as
\[
\mathcal{E}[h]=k\,\D[h]+a\,\I_{pp}[h]+2b\,\I_{xp}[h]+c\,\I_{xx}[h]\:.
\]
We divide the proof in three subsections.
\subsection{Evolution of the modified Entropy}\label{derivative}
Our first goal is to study the time evolution of the modified entropy, by computing the time derivative of $\D$, $\I_{xx}, \I_{xp}$ and $\I_{pp}$.

\begin{Lemma}\label{dtentropy}
The following holds:
\[
\frac{d}{dt}\D[h]=-\I_{pp}[h]\:,
\]
\end{Lemma}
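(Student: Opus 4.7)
The plan is to differentiate under the integral sign and use the two integration-by-parts formulas from Lemma~\ref{intbyparts}, which are tailor-made for this computation: the transport field $T$ is antisymmetric in the $x$-integration, and the operator $L=\Delta_p+W$ behaves like a weighted Laplacian whose Dirichlet form is exactly $\int g(\partial_p h,\partial_p f)\,d\mu$.

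First, I would differentiate: since $\frac{d}{dh}(h\log h)=\log h+1=\bar h+1$, we get
\[
\frac{d}{dt}\D[h]=\int(\partial_t h)(\bar h+1)\,dx\,d\mu
=\int(Lh-Th)(\bar h+1)\,dx\,d\mu.
\]
The constant ``$+1$'' contributes $\int(\partial_t h)\,dx\,d\mu=0$ because mass is conserved; equivalently, applying~\eqref{IPFT} with $f\equiv 1$ kills the transport part and applying~\eqref{IPF1} with $f\equiv 1$ kills the $L$ part (one should verify constants are admissible test functions in these identities, which they are since $\int d\mu=1$ and $\int_{\mathbb{T}^N}dx$ is finite).

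Next, I would handle the transport contribution $-\int (Th)\bar h\,dx\,d\mu$. Since $T=v^{(I)}(p)X_{(I)}$ acts only in $x$, the pointwise identity $(Th)\bar h=v^{(I)}\partial_{x^I}(h\log h-h)$ holds, so integration over $\mathbb{T}^N$ gives zero; this is precisely the content of~\eqref{IPFT} applied with $f=\bar h$ (noting $hT\bar h=Th$, since $\partial_{x^I}(h\log h)=(\bar h+1)\partial_{x^I}h$ and the extra ``$+1$'' term integrates to zero in $x$).

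Finally, for the diffusion contribution, I would apply~\eqref{IPF1} with $f=\bar h$. Using the symmetry of $L$ in $L^2(d\mu)$,
\[
\int (Lh)\bar h\,dx\,d\mu=\int h L\bar h\,dx\,d\mu=-\int g(\partial_p \bar h,\partial_p h)\,dx\,d\mu=-\I_{pp}[h],
\]
after integrating~\eqref{IPF1} over $\mathbb{T}^N$. Combining the three steps gives $\frac{d}{dt}\D[h]=-\I_{pp}[h]$.

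The only non-routine point is the justification for moving $d/dt$ inside the integral and for using integration by parts on $\log h$, which can be singular where $h$ vanishes. Since the Lemma is stated for positive smooth solutions (the general $L^1$ case being handled by approximation as mentioned at the start of Section~\ref{proof}), the integrand is smooth and the decay at infinity in $p$ provided by the Gaussian weight in $d\mu$ makes all the boundary terms vanish; so the main obstacle is purely bookkeeping rather than analytic.
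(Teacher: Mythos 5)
Your proof is correct and follows essentially the same route as the paper's: differentiate under the integral, write $\partial_t h = Lh - Th$, and dispose of the transport and zeroth-order terms by~\eqref{IPFT} and~\eqref{IPF1}, with the remaining term giving $-\I_{pp}[h]$ directly from~\eqref{IPF1}. The only cosmetic difference is that you invoke the symmetry~\eqref{selfad} of $L$ to pass from $\int(Lh)\bar h$ to $\int hL\bar h$, whereas~\eqref{IPF1} already applies directly to $\int \bar h\, Lh$ without that step.
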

\begin{proof}
We compute
\[
\frac{d}{dt}\D[h]=\int \partial_t h(1+\h)=-\int (1+\h)Th+\int Lh\,dx\,d\mu+\int \h L h\:.
\]
By~\eqref{IPFT} and~\eqref{IPF1}, the first two terms vanish and
\[
\int \h L h=-\int g(\partial_p h,\partial_p \h)\:.
\]
\end{proof}

\begin{Lemma}\label{dipp}
The following holds:
\begin{align*}
\frac{d}{dt}\I_{pp}[h]=&-2\I_{xp}[h]-2\int h\,\widetilde{\Ric}(\partial_p\h,\partial_p\h)\,dx\,d\mu\\
&-2\int h\, \partial_p^2\h\cdot\nabla_p ^2\h\,dx\,d\mu\:.
\end{align*}
where $\widetilde{\Ric}$ is the Bakry-Emery-Ricci tensor~\eqref{BEricci}.
\end{Lemma}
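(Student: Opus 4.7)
The plan is to differentiate $\I_{pp}[h]=\int h\,g(\partial_p\h,\partial_p\h)\,dx\,d\mu$ in time and substitute the evolution equation~\eqref{maineq}; dividing~\eqref{maineq} by $h$ yields $\partial_t\h = -T\h + L\h + g(\partial_p\h,\partial_p\h)$ with $L=\Delta_p+W$, using $\Delta_p h/h = \Delta_p\h + g(\partial_p\h,\partial_p\h)$. Then
\[
\frac{d}{dt}\I_{pp}[h] = \int \partial_t h\,g(\partial_p\h,\partial_p\h)\,dx\,d\mu + 2\int h\,g(\partial_p\h,\partial_p\partial_t\h)\,dx\,d\mu,
\]
and I would split the right-hand side into a transport and a Fokker-Planck contribution.

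For the transport piece, the key pointwise identity is
\[
\partial_p(T\h) = \mathscr{A}_x\h + T(\partial_p\h),
\]
obtained from $\partial_{p^j}(v^{(I)}\partial_{x^I}\h) = (\partial_{p^j}v^{(I)})\partial_{x^I}\h + v^{(I)}\partial_{x^I}\partial_{p^j}\h$ by raising the $j$-index with $g^{ij}(p)$, which depends on $p$ only. Pairing with $\partial_p\h$, the piece $g(T(\partial_p\h),\partial_p\h) = \tfrac12 T\,g(\partial_p\h,\partial_p\h)$ (since $g$ is independent of $x$) exactly cancels the contribution $-\int Th\,g(\partial_p\h,\partial_p\h)\,dx\,d\mu$ from $\partial_t h$ after integration by parts in $x$ via~\eqref{IPFT}. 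The surviving transport contribution is $-2\int h\,g(\mathscr{A}_x\h,\partial_p\h)\,dx\,d\mu$, and since $\mathscr{A}_xh = h\,\mathscr{A}_x\h$ by definition, this equals $-2\I_{xp}[h]$.

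For the Fokker-Planck piece, applying~\eqref{IPF1} to the two scalars $h$ and $g(\partial_p\h,\partial_p\h)$ gives $\int Lh\,g(\partial_p\h,\partial_p\h)\,d\mu = -\int h\,L\,g(\partial_p\h,\partial_p\h)\,d\mu$; combining this with the term $2\int h\,g(\partial_p\h,\partial_p[L\h+g(\partial_p\h,\partial_p\h)])\,dx\,d\mu$ from the second part of $\dot\I_{pp}$ and a second application of~\eqref{IPF1} reduces the Fokker-Planck contribution to $-2\int h\,\Gamma_2(\h)\,dx\,d\mu$, where
\[
\Gamma_2(\h) := \tfrac12 L\,g(\partial_p\h,\partial_p\h) - g(\partial_p\h,\partial_p L\h).
\]
The Bakry-\'Emery-Bochner identity
\[
\Gamma_2(\h) = \partial_p^2\h\cdot\nabla_p^2\h + \widetilde{\Ric}(\partial_p\h,\partial_p\h)
\]
follows from Lemma~\ref{identities}(v) (Bochner for the Laplacian, identifying $\partial_p\Delta_p\h$ with $\Div\partial_p^2\h$ up to a Ricci term) and Lemma~\ref{identities}(iv) applied to the $W\h$ piece (which produces $\nabla_pW_*=\nabla_p^2\log u$), combined into $\widetilde\Ric$ via~\eqref{BEricci}. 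Substituting yields the last two terms of the claim.

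The main obstacle will be the bookkeeping in the Bochner step: one has to apply Lemma~\ref{identities}(iv) with $Y=Z=\partial_p\h$ to reexpress $g(\partial_p\h,\partial_p\,g(\partial_p\h,\partial_p\h))$ as a contraction involving the Hessian, then use Lemma~\ref{identities}(v) together with the second-order integration by parts~\eqref{IPF3} to trade the resulting $\Div\partial_p^2\h$ for a Hessian squared plus a Ricci term, and finally absorb the Hessian of $\log u$ coming from the $W$-part into $\widetilde{\Ric}$. The transport step, by contrast, is conceptually clean once $\mathscr{A}_x$ is recognized as measuring the failure of $T$ and $\partial_p$ to commute on scalars, which is precisely the geometric motivation for Definition~\ref{modifiedgradient}.
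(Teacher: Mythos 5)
Your route is genuinely different from the paper's, and it is essentially correct up to one sign. The paper works directly with $h$: it expands $\frac{d}{dt}\I_{pp}$ into the four terms $\heartsuit,\diamondsuit,\clubsuit,\spadesuit$, shows $\heartsuit+\diamondsuit=-2\I_{xp}$ in coordinates, and processes the diffusive part $\clubsuit+\spadesuit$ by explicit integration by parts (Lemma~\ref{identities}(iv)--(v),~\eqref{IPF1},~\eqref{IPF3}) together with the identity $\partial_p^2h=h\,\partial_p^2\h+\partial_p\h\otimes\partial_ph$, watching a cross term cancel between $\clubsuit$ and $\spadesuit$. You instead pass entirely to $\h$, use $\partial_t\h=-T\h+L\h+g(\partial_p\h,\partial_p\h)$, and package the diffusive contribution as $-2\int h\,\Gamma_2(\h)\,dx\,d\mu$; the Bochner identity $\Gamma_2(\h)=\partial_p^2\h\cdot\nabla_p^2\h+\widetilde{\Ric}(\partial_p\h,\partial_p\h)$ then yields the last two terms of the lemma. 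This is the Bakry--\'Emery $\Gamma_2$-calculus viewpoint: it makes the appearance of $\widetilde{\Ric}$ structural rather than calculational, and in fact the Bochner identity can be established pointwise from Lemma~\ref{identities}(iv)--(v) and the symmetry of the Hessian, so on this route~\eqref{IPF3} is not needed at all. The transport step via $\partial_p(T\h)=\mathscr{A}_x\h+T(\partial_p\h)$ and $g(T\partial_p\h,\partial_p\h)=\tfrac12\,Tg(\partial_p\h,\partial_p\h)$ is also correct and gives a tidier geometric explanation of why $\heartsuit+\diamondsuit$ collapses to $-2\I_{xp}$ than the coordinate computation in the paper.

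One sign in your sketch needs fixing. Writing $\Gamma(\h)=g(\partial_p\h,\partial_p\h)$: since $L$ is symmetric on $L^2(d\mu)$ (this is~\eqref{selfad}, or apply~\eqref{IPF1} twice), one has $\int Lh\,\Gamma(\h)\,d\mu=+\int h\,L\Gamma(\h)\,d\mu$, not minus. With the correct sign the diffusive contribution is $\int h\,L\Gamma(\h)+2\int h\,g(\partial_p\h,\partial_pL\h)+2\int h\,g(\partial_p\h,\partial_p\Gamma(\h))$; the last term equals $2\int g(\partial_ph,\partial_p\Gamma(\h))=-2\int h\,L\Gamma(\h)$ by~\eqref{IPF1}, and the total is $-\int h\,L\Gamma(\h)+2\int h\,g(\partial_p\h,\partial_pL\h)=-2\int h\,\Gamma_2(\h)$, exactly as you claim. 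Taking your minus sign at face value would instead give $-3\int h\,L\Gamma(\h)+2\int h\,g(\partial_p\h,\partial_pL\h)$, which is not $-2\int h\,\Gamma_2(\h)$. Since your stated conclusion is right, this is almost certainly a dropped sign rather than a conceptual error, but it needs correcting; and the Bochner identity, which you assert but do not prove, is precisely where the real work of the paper's proof of this lemma lives.
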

\begin{proof}
We compute 
\begin{align*}
\frac{d}{dt}\I_{pp}[h]=&\,2\int g(\partial_p \h,\partial_p \partial_t h)-\int g(\partial_p\h,\partial_p\h)\partial_th\\
=&\underbrace{-2\int g(\partial_p\h,\partial_p (Th))}_\heartsuit
\underbrace{+\int g(\partial_p\h,\partial_p\h)Th}_\diamondsuit\\
&\underbrace{+2\int g(\partial_p\h,\partial_p (Lh))}_\clubsuit
\underbrace{-\int g(\partial_p\h,\partial_p\h)Lh}_\spadesuit\:.
\end{align*}
We claim that $\heartsuit + \diamondsuit = -2\I_{xp}[h]$. We prove this using the coordinates representation. From one hand
\[
g(\partial_p\h,\partial_p(Th))=g^{ij}\partial_{p^i}\h(\partial_{p^j}v^{(I)})\partial_{x^I}h+g^{ij}\partial_{p^i}\h\, v^{(I)}\partial_{p^j}\partial_{x^I}h\:;
\]
on the other hand, integrating by parts in the $x$ variable,
\[
\diamondsuit=-2\int g^{ij}\partial_{p^j}h\partial_{p^i}\partial_{x^I}\h\,v^{(I)}=2\int g^{ij}\partial_{p^j}\partial_{x^I}h\, v^{(I)}\partial_{p^i}\h\:.
\]
Thus
\begin{equation}\label{temporal1}
\heartsuit+\diamondsuit=-2\int g^{ij}\partial_{p^i}\h(\partial_{p^j}v^{(I)})\partial_{x^I}h=-2\int g(\mathscr{A}_x h,\partial_p\h)\:.
\end{equation}
The term $\clubsuit$ is
\[
\clubsuit=2\int g(\partial_p\h,\partial_p(\Delta_ph))+2\int g(\partial_p\h,\partial_p(Wh))=\clubsuit_1+\clubsuit_2\:.
\]
By (v) of Lemma~\ref{identities} and~\eqref{IPF3} we have
\begin{align*}
\clubsuit_1=&-2\int\Ric(\partial_p\h,\partial_p h)+2\int g(\partial_p\h,\Div\partial_p^2h)\\
=&-2\int\Ric(\partial_p\h,\partial_p h)-2\int\partial_p^2h\cdot\nabla_p^2\h-2\int \partial_p^2 h(W_*,\nabla_p\h)\:.
\end{align*}
Moreover by (iv) of Lemma~\ref{identities},
\[
\clubsuit_2=2\int g(\partial_p\h,\partial_p(g(\partial_ph,W)))=2\int\partial_p^2h(W_*,\nabla_p\h)
+2\int \partial_pW(\nabla_ph,\nabla_p\h)\:.
\]
Summing up and using the identity
\[
\partial_p^2 h=h\,\partial_p^2\h+\partial_p\h\otimes\partial_p h
\]
we obtain
\begin{equation}
\clubsuit=-2\int h\,\widetilde{\Ric}(\partial_p\h,\partial_p\h)-2\int h\,\partial_p^2\h\cdot\nabla_p^2\h-2\int\nabla_p^2\h(\partial_p\h,\partial_ph)\:.\label{temporal2}
\end{equation}
Finally, by~\eqref{IPF1} and (iv) of Lemma~\ref{identities},
\begin{equation}\label{temporal3}
\spadesuit=\int g(\partial_ph,\partial_p(g(\partial_p\h,\partial_p\h)))=2\int\partial_p^2\h(\nabla_p\h,\nabla_ph)\:,
\end{equation}
which cancels out with the last term of~\eqref{temporal2}. The claim follows summing up~\eqref{temporal1}--\eqref{temporal3}. 
\end{proof}

\begin{Lemma}\label{dixp}
The following holds:
\begin{align*}
\frac{d}{dt}\I_{xp}[h]=&-\I_{xx}[h]-\int h\,\widetilde{\Ric}(\mathscr{A}_x \h,\partial_p\h)\,dx\,d\mu\\
&-2\int h\, \partial_p^2\h\cdot\nabla_p (\mathscr{A}_x \h)_*\,dx\,d\mu
+\int g(\partial_p\h,\mathscr{B}_x h)\,dx\,d\mu\\
&+2\int h\,\partial_p^2\h\cdot(\mathscr{C}_x\h)_*\,dx\,d\mu
+\int\mathscr{C}_xh(W_*,\nabla_p\h)\,dx\,d\mu\:.
\end{align*}
\end{Lemma}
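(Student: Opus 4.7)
My plan is to mimic the proof of Lemma \ref{dipp}, replacing one of the two occurrences of $\partial_p\h$ by $\mathscr{A}_x\h$. Since $\mathscr{A}_x h = (\partial_{x^I}h)\partial_p v^{(I)} = h\,\mathscr{A}_x\h$ (because $\partial_{x^I}h = h\,\partial_{x^I}\h$) and $\partial_p\partial_t\h = \partial_p(\partial_t h)/h - (\partial_t h/h)\partial_p\h$, direct time differentiation yields
\[
\frac{d}{dt}\I_{xp}[h] = \int g(\mathscr{A}_x\partial_t h,\partial_p\h) + \int g(\mathscr{A}_x\h,\partial_p\partial_t h) - \int(\partial_t h)\,g(\mathscr{A}_x\h,\partial_p\h).
\]
I then substitute $\partial_t h = -Th + Lh$ and split the right-hand side into a transport contribution $\mathcal T$ (from $-Th$) and a Fokker--Planck contribution $\mathcal L$ (from $Lh$).

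For $\mathcal T$, I exploit that $v^{(I)}$ depends only on $p$, so $\mathscr{A}_x(Th) = v^{(I)}\partial_{x^I}(\mathscr{A}_x h)$ and $\partial_p(Th) = \mathscr{A}_x h + v^{(I)}\partial_p\partial_{x^I}h$. An integration by parts in $x$ via \eqref{IPFT} makes the terms carrying the mixed derivative $\partial_p\partial_{x^I}h$ cancel pairwise, exactly as in the derivation of \eqref{temporal1}, leaving $-\int g(\mathscr{A}_x h,\mathscr{A}_x\h) = -\I_{xx}[h]$.

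For $\mathcal L$ I split $Lh = \Delta_p h + Wh$. Lemma \ref{identities}(v) converts the $\Delta_p$-terms into $-\Ric(\mathscr{A}_x\h,\partial_p h)$-pieces plus a divergence $\Div\partial_p^2 h$, which \eqref{IPF3} handles; after the substitution $\partial_p^2 h = h\,\partial_p^2\h + \partial_p\h\otimes\partial_p h$ this yields the term $-2\int h\,\partial_p^2\h\cdot\nabla_p(\mathscr{A}_x\h)_*$ together with an auxiliary $W_*$-contribution $-\int\partial_p^2 h(W_*,(\mathscr{A}_x\h)_*)$. The $Wh$-terms, processed via Lemma \ref{identities}(iv), supply the complementary $-\nabla_p W_*(\mathscr{A}_x\h,\partial_p h)$ piece, which combines with $\Ric$ through \eqref{BEricci} into the announced $-\int h\,\widetilde{\Ric}(\mathscr{A}_x\h,\partial_p\h)$. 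The genuinely new contributions (absent from Lemma \ref{dipp}) arise from the commutator identity $\partial_p(\mathscr{A}_x h) = \mathscr{A}_x(\partial_p h) + \mathscr{C}_x h$, i.e.\ $[\partial_p,\mathscr{A}_x]h = \mathscr{C}_x h$. When $\mathscr{C}_x h$ enters through the $\Div\partial_p^2 h$ step, a further integration by parts produces $(\partial_{x^I}h)\Div\partial_p^2 v^{(I)} = \mathscr{B}_x h$, accounting for both $\int g(\partial_p\h,\mathscr{B}_x h)$ and $2\int h\,\partial_p^2\h\cdot(\mathscr{C}_x\h)_*$; when it enters through the $Wh$-piece, it supplies $\int\mathscr{C}_x h(W_*,\nabla_p\h)$. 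Finally, the last summand $-\int(\partial_t h)\,g(\mathscr{A}_x\h,\partial_p\h)$ cancels, exactly as in \eqref{temporal3}, the extraneous $\nabla_p^2\h(\partial_p\h,\mathscr{A}_x\h)$-type remainder produced by the Leibniz expansion of $\partial_p^2 h$.

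The main obstacle is the bookkeeping of terms generated by the commutator $[\partial_p,\mathscr{A}_x] = \mathscr{C}_x$: one must distinguish (i) the pieces that combine with the Hessian $\nabla_p^2\h$ to give $2\int h\,\partial_p^2\h\cdot(\mathscr{C}_x\h)_*$, (ii) those absorbed via an extra divergence into the $\mathscr{B}_x h$ term, and (iii) those contracted against $W_*$, while verifying that all signs and numerical coefficients match the asserted formula.
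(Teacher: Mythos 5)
Your proposal is correct and follows essentially the same route as the paper: differentiate the three slots of $\I_{xp}$, substitute $\partial_t h=-Th+Lh$, show the transport pieces collapse to $-\I_{xx}[h]$ exactly as in~\eqref{temporal1}, and then grind the $L$-pieces through Lemma~\ref{identities}(iv)--(v), \eqref{IPF1} and \eqref{IPF3}, with the Leibniz expansion $\partial_p^2 h = h\,\partial_p^2\h+\partial_p\h\otimes\partial_p h$ doing the final cleanup. The only cosmetic difference is that you package the source of the $\mathscr{B}_x,\mathscr{C}_x$ terms as the ``commutator'' $[\partial_p,\mathscr{A}_x]h=\mathscr{C}_x h$, whereas the paper pushes $\mathscr{A}_x$ past $L$ to rewrite the analogue of $\spadesuit$ as $\int L(\partial_{x^I}h)\,g(\partial_p v^{(I)},\partial_p\h)$ and lets the $\partial_p^2 v^{(I)}$ factor drop out of identity (iv) directly; these are the same calculation in slightly different clothing. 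Be aware that your identity as literally written does not type-check ($\mathscr{A}_x$ is defined on scalars, so $\mathscr{A}_x(\partial_p h)$ is not a priori meaningful) and should be read as the Leibniz expansion $\partial_p(\mathscr{A}_x h)=\partial_p v^{(I)}\otimes\partial_p(\partial_{x^I}h)+\mathscr{C}_x h$, and that the factor $2$ in front of $\int h\,\partial_p^2\h\cdot(\mathscr{C}_x\h)_*$ actually comes from two distinct contributions (one from the $\mathscr{B}_x$-producing step, one from the $\nabla_p(\mathscr{A}_x\h)_*$ expansion), a detail worth spelling out if the sketch is to be turned into a full argument.
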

\begin{proof}
We have
\begin{align*}
\frac{d}{dt}\I_{xp}[h]=&\int g(\mathscr{A}_x\partial_th,\partial_p\h)-\int  g(\mathscr{A}_x\h,\partial_p\h)\partial_th+ \int g(\mathscr{A}_x\h,\partial_p\partial_th))\\
=&\underbrace{-\int g(\mathscr{A}_x\h,\partial_p(Th))}_{\heartsuit_1}
\underbrace{-\int g(\mathscr{A}_x(Th),\partial_p\h)}_{\heartsuit_2}\underbrace{+\int g(\mathscr{A}_x\h,\partial_p\h) Th}_{\heartsuit_3}\\
&\underbrace{-\int g(\mathscr{A}_x\h,\partial_p\h)Lh}_\diamondsuit
\underbrace{+\int g(\mathscr{A}_x\h,\partial_p(Lh))}_\clubsuit
\underbrace{+\int g(\mathscr{A}_x(Lh),\partial_p\h)}_\spadesuit\:.
\end{align*}
Now we claim that 
\begin{equation}\label{temp0}
\heartsuit=\heartsuit_1+\heartsuit_2+\heartsuit_3=-\I_{xx}[h]\:. 
\end{equation}
In fact, using the coordinates representation the first term of $\heartsuit$ can be rewritten as
\begin{align*}
\heartsuit_1&=-\int g^{ij}\partial_{p^i}v^{(I)}\partial_{x^I}\h\,\partial_{p^j}v^{(J)}\partial_{x^J}h-\int g^{ij}\partial_{p^i}v^{(I)}
\partial_{x^I}\h\, v^{(J)}\partial_{p^j}\partial_{x^J}h\\
&=\heartsuit_{1A}+\heartsuit_{1B}\:.
\end{align*}
It is clear that $\heartsuit_{1A}=-\I_{xx}$. Integrating by parts in the $x$ variable we obtain
\[
\heartsuit_{1B}= \int g^{ij}\partial_{p^i}v^{(I)}
\partial_{x^I}\partial_{x^J}\h\, v^{(J)}\partial_{p^j}h\:.
\]
In the previous expression we use the identity
\[
\partial_{x^I}\partial_{x^J}\h=h^{-1}\partial_{x^I}\partial_{x^J} h-h^{-2}\partial_{x^I}h\partial_{x^J}h
\]
and so doing we obtain
\begin{align*}
\heartsuit_{1B}=&\int g^{ij}\partial_{p^i}v^{(I)}\partial_{x^I}\partial_{x^J}h\,v^{(J)}\partial_{p^j}\h-\int g^{ij}\partial_{p^i}v^{(I)}\partial_{x^I}\h\, v^{(J)}\partial_{x^J}h\partial_{p^j}\h\\
=&\int g(\mathscr{A}_x(Th),\partial_p\h)-\int g(\mathscr{A}_x\h,\partial_p\h)Th=-\heartsuit_2-\heartsuit_3\:.
\end{align*}
This proves~\eqref{temp0}. It remains to study the integrals $\diamondsuit,\clubsuit,\spadesuit$. We begin by applying~\eqref{IPF1} and (iv) of Lemma~\ref{identities} to $\diamondsuit$:
\begin{equation}\label{temp2}
\diamondsuit= \int g(\partial_p h,\partial_p (g(\mathscr{A}_x\h,\partial_p\h)))=\!\!\int \partial_p(\mathscr{A}_x\h)(\nabla_p\h,\nabla_ph)+\!\!\int\partial_p^2\h((\mathscr{A}_x\h)_*,\nabla_ph)\:.
\end{equation}
As to $\clubsuit$, we first split it as
\[
\clubsuit=\int g(\mathscr{A}_x\h,\partial_p(\Delta_ph))+\int g(\mathscr{A}_x\h,\partial_p(Wh))=\clubsuit_1+\clubsuit_2\:.
\]
By (v) of Lemma~\ref{identities} and~\eqref{IPF3} we have
\begin{align*}
\clubsuit_1=&\int g(\mathscr{A}_x\h,\Div\partial_p^2h)-\int\Ric(\mathscr{A}_x\h,\partial_ph)\\
=&-\int\partial_p^2h\cdot\nabla_p(\mathscr{A}_x\h)_*-\int\partial_p^2h(W_*,(\mathscr{A}_x\h)_*)-\int\Ric(\mathscr{A}_x\h,\partial_ph)\:.
\end{align*}
Likewise
\[
\clubsuit_2=\int g(\mathscr{A}_x\h,\partial_p(g(\partial_ph,W)))=\int\partial_p^2h(W_*,(\mathscr{A}_x\h)_*)+\int\nabla_pW_*(\partial_ph,\mathscr{A}_x\h)\:.
\]
Summing up,
\[
\clubsuit=-\int\widetilde{\Ric}(\mathscr{A}_x\h,\partial_p h)-\int\partial_p^2h\cdot\nabla_p(\mathscr{A}_x\h)_*\:.
\]
In the second integral we replace
\[
\partial_p^2 h=h\,\partial_p^2\h+\partial_p\h\otimes\partial_p h
\]
and we get
\begin{equation}
\clubsuit=-\int h\,\widetilde{\Ric}(\mathscr{A}_x\h,\partial_p \h)-\int h\,\partial_p^2\h\cdot\nabla_p(\mathscr{A}_x\h)_*-\int\partial_p(\mathscr{A}_x\h)(\nabla_p\h,\nabla_ph)\:.\label{temp3}
\end{equation}
Note the the last term in the r.h.s. of~\eqref{temp3} cancels out with the first term in the r.h.s. of~\eqref{temp2}. We now work out the term $\spadesuit$. First we rewrite it as
\[
\spadesuit=\int L(\partial_{x^I}h)g(\partial_pv^{(I)},\partial_p\h)\:.
\]
Then by~\eqref{IPF1} and (iv) of Lemma~\ref{identities} we have
\begin{subequations}\label{tempo}
\begin{equation}\label{temp4}
\spadesuit=-\int\partial_p^2 v^{(I)}(\nabla_p\h,\nabla_p\partial_{x^I}h)-\int\partial_p^2\h(\nabla_pv^{(I)},\nabla_p\partial_{x^I} h)=\spadesuit_1+\spadesuit_2\:.
\end{equation}
In $\spadesuit_1$ we apply the identity
\[
\nabla_p\h\otimes\nabla_p\partial_{x^I}h=\nabla_p(\nabla_p\h \partial_{x^I}h)-\nabla_p^2\h\partial_{x^I}h
\]
and~\eqref{IPF3} to obtain
\begin{align}\label{temp5}
\spadesuit_1&=-\int\partial_p^2 v^{(I)}(\nabla_p\h,\nabla_p\partial_{x^I}h)=-\int\partial_p^2v^{(I)}\cdot\nabla_p\h\otimes\nabla_p\partial_{x^I}h
\nonumber\\
&=\int\partial_p^2v^{(I)}\partial_{x^I}h\cdot\nabla_p^2\h-\int\partial_p^2 v^{(I)}\cdot\nabla_p(\nabla_p\h\partial_{x^I}h)\nonumber\\
&=\int h\,\partial_p^2\h\cdot(\mathscr{C}_x\h)_*+\int g(\partial_p\h,\mathscr{B}_x h)+\int\partial_p^2 v^{(I)}(W_*,\partial_{x^I}h\nabla_p\h)\:.
\end{align}
Note that the last term in~\eqref{temp5} can be rewritten as
\begin{equation}\label{tempo5}
\int\partial_p^2 v^{(I)}(W_*,\partial_{x^I}h\nabla_p\h)=\int\mathscr{C}_xh(W_*,\nabla_p\h)\:.
\end{equation}
In $\spadesuit_2$ we apply the identity
\[
\nabla_p\partial_{x^I}h=h\nabla_p\partial_{x^I}\h+\partial_{x^I}\h\nabla_ph
\]
to obtain
\begin{align}\label{temp6}
\spadesuit_2&=-\int\partial_p^2\h(\nabla_pv^{(I)},\partial_{x^I}\h\nabla_ph)-\int h\,\partial_p^2\h(\nabla_p v^{(I)},\nabla_p\partial_{x^I}\h)\nonumber\\
&=-\int\partial_p^2\h((\mathscr{A}_x\h)_*,\nabla_ph)-\int h\,\partial_p^2\h\cdot\nabla_p v^{(I)}\otimes\nabla_p\partial_{x^I}\h=\spadesuit_{2A}+\spadesuit_{2B}\:.
\end{align}
Note that $\spadesuit_{2A}$ cancels out with the second term in the r.h.s. of~\eqref{temp2}. 
In $\spadesuit_{2B}$ we use
\[
\nabla_p v^{(I)}\otimes\nabla_p\partial_{x^I}\h=\nabla_p(\mathscr{A}_x\h)_*-\nabla_p^2v^{(I)}\partial_{x^I}\h
\]
to finally obtain
\begin{equation}\label{temp7}
\spadesuit_{2B}=-\int h\,\partial_p^2\h\cdot\nabla_p(\mathscr{A}_x\h)_*+\int h\,\partial_p^2\h\cdot(\mathscr{C}_x\h)_*\:.
\end{equation}
\end{subequations}
The claim follows by~\eqref{temp0}--\eqref{tempo}.
\end{proof}

\begin{Lemma}\label{dixx}
The following holds:
\begin{align*}
\frac{d}{dt}\I_{xx}[h]=&-2\int h\,\partial_p(\mathscr{A}_x \h)\cdot\nabla_p(\mathscr{A}_x \h)_*\,dx\,d\mu
+2\int g(\mathscr{A}_x \h,\mathscr{B}_x h)\,dx\,d\mu\\
&+4\int h\,\partial_p(\mathscr{A}_x \h)\cdot(\mathscr{C}_x h)_*\,dx\,d\mu+2\int\mathscr{C}_xh(W_*,(\mathscr{A}_x\h)_*)\,dx\,d\mu\:.
\end{align*}
\begin{proof}
The proof is very similar to that of Lemma~\ref{dixp}. First we compute
\begin{align*}
\frac{d}{dt}\I_{xx}[h]=&2\int g(\mathscr{A}_x\partial_th,\mathscr{A}_x\h)-\int g(\mathscr{A}_x\h,\mathscr{A}_x\h)\partial_th\\
=&\underbrace{-2\int g(\mathscr{A}_x(Th),\mathscr{A}_x\h)}_\heartsuit
\underbrace{+\int Th\,g(\mathscr{A}_x\h,\mathscr{A}_x\h)}_\diamondsuit\\
&\underbrace{-\int Lh\,g(\mathscr{A}_x\h,\mathscr{A}_x\h)}_\clubsuit
\underbrace{+2\int g(\mathscr{A}_x(Lh),\mathscr{A}_x\h)}_\spadesuit
\end{align*}
We claim that 
\begin{equation}\label{tempo1}
\heartsuit+\diamondsuit=0\:.
\end{equation}
In fact, by~\eqref{IPFT}
\begin{align*}
\diamondsuit&=-\int h T(g(\mathscr{A}_x\h,\mathscr{A}_x\h))=-2\int h g(\mathscr{A}_x (T\h),\mathscr{A}_x\h)\\
&=-2\int g(\mathscr{A}_x (Th),\mathscr{A}_x\h)+2\int (Th)\,g(\mathscr{A}_x\h,\mathscr{A}_x\h)\Rightarrow \diamondsuit=-\heartsuit\:.
\end{align*}
By~\eqref{IPF1} and (iv) of Lemma~\ref{identities} the term $\clubsuit$ can be rewritten as
\begin{equation}\label{tempo2}
\clubsuit=2\int\partial_p(\mathscr{A}_x\h)((\mathscr{A}_x\h)_*,\nabla_p h)\:.
\end{equation}
Likewise,
\begin{subequations}\label{tempor}
\begin{align}
\spadesuit&=2\int L(\partial_{x^I}h)\,g(\partial_p v^{(I)},\mathscr{A}_x\h)=-2\int g(\partial_ph,\partial_p(g(\partial_p v^{(I)},\mathscr{A}_x(\partial_{x^I}\h))))\nonumber\\
&=-2\int\partial_p^2v^{(I)}((\mathscr{A}_x\h)_*,\nabla_p\partial_{x^I} h)-2\int\partial_p(\mathscr{A}_x\h)(\nabla_p v^{(I)},\nabla_p\partial_{x^I} h)\nonumber\\
&=\spadesuit_1+\spadesuit_2\:.
\end{align}
Using the identity
\[
(\mathscr{A}_x\h)_*\otimes\nabla_p\partial_{x^I}h=\nabla_p((\mathscr{A}_x\h)_*\partial_{x^I}h)-\partial_{x^I}h\nabla_p(\mathscr{A}_x\h)_*\:,
\]
we may rewrite $\spadesuit_1$ as
\begin{align*}
\spadesuit_1&=-2\int\partial_p^2 v^{(I)}((\mathscr{A}_x\h)_*,\nabla_p\partial_{x^I}h)=-2\int\partial_p^2 v^{(I)}\cdot(\mathscr{A}_x\h)_*\otimes\nabla_p\partial_{x^I}h\\
&=-2\int\partial_p^2 v^{(I)}\cdot\nabla_p((\mathscr{A}_x\h)_*\partial_{x^I}h)+2\int\partial_p^2v^{(I)}\cdot\nabla_p(\mathscr{A}_x\h)_*\partial_{x^I}h\:.
\end{align*}
Applying~\eqref{IPF3} to the first term in the last line we get
\begin{equation}
\spadesuit_1=2\int g(\mathscr{A}_x\h,\mathscr{B}_x h)+2\int\partial_p^2 v^{(I)}(W_*,(\mathscr{A}_x\h)_*\partial_{x^I}h)+2\int h\,\partial_p(\mathscr{A}_x\h)\cdot(\mathscr{C}_x\h)_*\:.
\end{equation}
The second integral in the right hand side can be rewritten as
\begin{equation}
2\int\partial_p^2 v^{(I)}(W_*,(\mathscr{A}_x\h)_*\partial_{x^I}h)=2\int\mathscr{C}_x(W_*,(\mathscr{A}\h)_*)\:.
\end{equation}
Using the identity
\[
\nabla_p\partial_{x^I}h=h\nabla_p\partial_{x^I}\h+\partial_{x^I}\h\nabla_p h
\]
the term $\spadesuit_2$ becomes
\begin{align}\label{tempo4}
\spadesuit_2&=-2\int h\,\partial_p(\mathscr{A}_x\h)(\nabla_pv^{(I)},\nabla_p\partial_{x^I}\h)-2\int\partial_p(\mathscr{A}_x\h)(\nabla_p v^{(I)},\partial_{x^I}\h\nabla_ph)\nonumber\\
&=\spadesuit_{2A}+\spadesuit_{2B}\:.
\end{align}
Note that $\spadesuit_{2B}$ cancels out with $\clubsuit$. In $\spadesuit_{2A}$ we use
\[
\partial_p(\mathscr{A}_x\h)(\nabla_p v^{(I)},\nabla_p\partial_{x^I}\h)=\partial_p(\mathscr{A}_x\h)\cdot\nabla_pv^{(I)}\otimes\nabla_p\partial_{x^I}\h
\]
and
\[
\nabla_pv^{(I)}\otimes\nabla_p\partial_{x^I}\h=\nabla_p(\mathscr{A}_x\h)_*-\nabla_p^2v^{(I)}\partial_{x^I}\h
\]
to obtain
\begin{equation}
\spadesuit_{2A}= -2\int h\,\partial_p(\mathscr{A}_x\h)\cdot\nabla_p(\mathscr{A}_x\h_*)+2\int h\,\partial_p(\mathscr{A}_x\h)\cdot(\mathscr{C}_x h)_*\:.
\end{equation}
\end{subequations}
Summing up~\eqref{tempo1}--\eqref{tempor} concludes the proof.
\end{proof}
\end{Lemma}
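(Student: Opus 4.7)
Plan: the statement is the time derivative of $\I_{xx}[h]=\int h\,g(\mathscr{A}_x\h,\mathscr{A}_x\h)\,dx\,d\mu$, and the calculation mirrors that of Lemma~\ref{dixp}, with the one-form $\partial_p\h$ replaced throughout by the vector field $\mathscr{A}_x\h$ in both entries of the inner product. I would begin from the Leibniz expansion
\[
\tfrac{d}{dt}\I_{xx}[h]=2\int g(\mathscr{A}_x\partial_t h,\mathscr{A}_x\h)\,dx\,d\mu-\int g(\mathscr{A}_x\h,\mathscr{A}_x\h)\,\partial_t h\,dx\,d\mu,
\]
substitute $\partial_t h=-Th+Lh$ and distribute into transport integrals $\heartsuit,\diamondsuit$ and diffusion integrals $\clubsuit,\spadesuit$ exactly as in Lemma~\ref{dixp}. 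The transport pair vanishes by moving $T$ off $Th$ in $\diamondsuit$ via~\eqref{IPFT}, exploiting that $T$ commutes with the $x$-differential operator $\mathscr{A}_x$, and invoking the pointwise identity $h\,\mathscr{A}_x(T\h)=\mathscr{A}_x(Th)-(Th)\mathscr{A}_x\h$ (a consequence of $\mathscr{A}_x h=h\,\mathscr{A}_x\h$); one finds $\diamondsuit=\heartsuit+2\diamondsuit$, whence $\heartsuit+\diamondsuit=0$.

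For the diffusion pieces I would apply~\eqref{IPF1} and Lemma~\ref{identities}(iv). The symmetric choice $Y=Z=\mathscr{A}_x\h$ in (iv) gives $\clubsuit=2\int\partial_p(\mathscr{A}_x\h)((\mathscr{A}_x\h)_*,\nabla_p h)$. For $\spadesuit=2\int g(\mathscr{A}_x(Lh),\mathscr{A}_x\h)$ the commutation $\mathscr{A}_xL=L\mathscr{A}_x$ lets me rewrite the integrand as $2\int L(\partial_{x^I}h)\,g(\partial_p v^{(I)},\mathscr{A}_x\h)$, and a second application of~\eqref{IPF1} and (iv) splits $\spadesuit=\spadesuit_1+\spadesuit_2$ with
\[
\spadesuit_1=-2\!\int\partial_p^2v^{(I)}((\mathscr{A}_x\h)_*,\nabla_p\partial_{x^I}h),\quad \spadesuit_2=-2\!\int\partial_p(\mathscr{A}_x\h)(\nabla_p v^{(I)},\nabla_p\partial_{x^I}h).
\]
In $\spadesuit_1$ I use the Leibniz identity $(\mathscr{A}_x\h)_*\otimes\nabla_p\partial_{x^I}h=\nabla_p(\partial_{x^I}h\,(\mathscr{A}_x\h)_*)-\partial_{x^I}h\,\nabla_p(\mathscr{A}_x\h)_*$ and apply~\eqref{IPF3} to the exact piece with $A=\partial_p^2v^{(I)}$, $Z=\partial_{x^I}h\,\mathscr{A}_x\h$; this produces $2\int g(\mathscr{A}_x\h,\mathscr{B}_x h)+2\int\mathscr{C}_xh(W_*,(\mathscr{A}_x\h)_*)$ together with a remainder that, after raising/lowering indices, reads $2\int h\,\partial_p(\mathscr{A}_x\h)\cdot(\mathscr{C}_x\h)_*$. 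In $\spadesuit_2$ I substitute $\nabla_p\partial_{x^I}h=h\,\nabla_p\partial_{x^I}\h+\partial_{x^I}\h\,\nabla_p h$: the $\nabla_p h$ piece reproduces $-\clubsuit$ and cancels, and in the other piece the dual identity $\nabla_p v^{(I)}\otimes\nabla_p\partial_{x^I}\h=\nabla_p(\mathscr{A}_x\h)_*-(\mathscr{C}_x\h)_*$ isolates $-2\int h\,\partial_p(\mathscr{A}_x\h)\cdot\nabla_p(\mathscr{A}_x\h)_*$ together with a second copy of $2\int h\,\partial_p(\mathscr{A}_x\h)\cdot(\mathscr{C}_x\h)_*$. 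Adding the two copies yields the coefficient $4$ in the statement, and the proof is complete.

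The main obstacle is bookkeeping rather than any new geometric idea. Two points deserve care: first, the metric-equivalence identification $\mathscr{C}_xh\cdot\nabla_p(\mathscr{A}_x\h)_*=\partial_p(\mathscr{A}_x\h)\cdot(\mathscr{C}_xh)_*$ (which is what allows the two $(\mathscr{C}_x\h)_*$-contributions, one from $\spadesuit_1$ and one from $\spadesuit_{2A}$, to combine into a single term); second, correctly tracking the cancellation $\clubsuit+\spadesuit_{2B}=0$ that eliminates the spurious $\nabla_p h$ contributions. Once the commutator structure $[T,\mathscr{A}_x]=[L,\partial_{x^I}]=0$ is used and~\eqref{IPF3} is applied with the right choice of $A$ and $Z$, no ingredient beyond those already deployed in Lemma~\ref{dixp} is needed.
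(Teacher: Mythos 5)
Your proposal reproduces the paper's proof step by step: the same Leibniz expansion, the same identification of the four pieces $\heartsuit,\diamondsuit,\clubsuit,\spadesuit$, the same use of~\eqref{IPFT} for the transport cancellation, the same application of~\eqref{IPF1}, Lemma~\ref{identities}(iv) and~\eqref{IPF3} to handle the diffusion pieces, and the same bookkeeping of the $\spadesuit_1$, $\spadesuit_{2A}$, $\spadesuit_{2B}$ contributions. The argument is correct and essentially identical to the one in the paper (your consistent use of $(\mathscr{C}_x\h)_*$ in both remainder terms is in fact the correct form; the paper's proof and statement write $(\mathscr{C}_x h)_*$ once without the compensating factor, which is a harmless typo).
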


\subsection{A differential inequality for the modified entropy}
Recall that
\[
\mathcal{E}[h]=k\,\D[h]+a\,\I_{pp}[h]+2b\,\I_{xp}[h]+c\,\I_{xx}[h]\:.
\]
In this section we prove that, under suitable conditions on the constants $a,b,c,k$, the modified entropy satisfies
\begin{equation}\label{goal}
\mathcal{E}[h]\geq k\,\D[h]\:,\qquad \frac{d}{dt}{\mathcal{E}}[h]\leq -d(a\, \I_{pp}[h]+2b\,\I_{xp}[h]+c\,\I_{xx}[h])\:,
\end{equation}
where $d$ is a positive constant. In particular, the first bound shows that exponential decay of the modified entropy implies exponential decay of the entropy.

The bound from below is easily established.
\begin{Lemma}\label{lowerbound}
Assume $b\leq \sqrt{ac}$. Then $\mathcal{E}[h]\geq k\,\D[h]$.
\end{Lemma}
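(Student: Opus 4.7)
The proposal is to reduce the inequality $\mathcal{E}[h] \geq k\mathfrak{D}[h]$ to the pointwise non-negativity of a quadratic form in two vectors of $\X(\M)$, then apply the Cauchy--Schwarz inequality with respect to the metric $g$.

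First, I would rewrite the three Fisher-type functionals in the form that exhibits their pointwise structure. Using $\partial_p h = h\,\partial_p\h$ and $\mathscr{A}_x h = h\,\mathscr{A}_x\h$ (which follow from $\h=\log h$ and the Leibniz rule, noting that $\mathscr{A}_x$ is a first-order differential operator in $x$ at each fixed $p$), I can write
\begin{align*}
\I_{pp}[h] &= \int h\,g(\partial_p\h,\partial_p\h)\,dx\,d\mu,\\
\I_{xp}[h] &= \int h\,g(\mathscr{A}_x\h,\partial_p\h)\,dx\,d\mu,\\
\I_{xx}[h] &= \int h\,g(\mathscr{A}_x\h,\mathscr{A}_x\h)\,dx\,d\mu.
\end{align*}
Therefore
\[
a\,\I_{pp}[h]+2b\,\I_{xp}[h]+c\,\I_{xx}[h]=\int h\bigl[a\,g(X,X)+2b\,g(X,Y)+c\,g(Y,Y)\bigr]\,dx\,d\mu,
\]
where $X=\partial_p\h$ and $Y=\mathscr{A}_x\h$ are both vector fields on $\M$ (at each fixed $(t,x)$).

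The key step is to show the bracketed expression is pointwise non-negative. Since $g$ is positive definite, the Cauchy--Schwarz inequality yields $|g(X,Y)|\leq g(X,X)^{1/2}g(Y,Y)^{1/2}$, so
\[
a\,g(X,X)+2b\,g(X,Y)+c\,g(Y,Y)\geq a\,g(X,X)-2b\,g(X,X)^{1/2}g(Y,Y)^{1/2}+c\,g(Y,Y).
\]
The right-hand side is a quadratic form in the non-negative scalars $s=g(X,X)^{1/2}$ and $t=g(Y,Y)^{1/2}$, namely $as^2-2b\,st+ct^2$, whose discriminant is $4(b^2-ac)\leq 0$ under the hypothesis $b\leq\sqrt{ac}$. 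Hence the bracket is $\geq 0$ everywhere.

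Since $h>0$, integrating this pointwise bound against the measure $dx\,d\mu$ gives $a\,\I_{pp}[h]+2b\,\I_{xp}[h]+c\,\I_{xx}[h]\geq 0$, which upon adding $k\,\D[h]$ yields the stated conclusion $\mathcal{E}[h]\geq k\,\D[h]$. No serious obstacle is anticipated: the result is essentially the standard criterion for a $2\times 2$ symmetric matrix with positive diagonal entries to be positive semi-definite, applied pointwise to the Riemannian Gram matrix of $(X,Y)$, and then integrated.
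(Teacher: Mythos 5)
Your proof is correct and is essentially the same argument as the paper's: the paper bounds the cross term $2b\,\I_{xp}$ via Young's inequality with a free parameter $\varepsilon$ chosen so that $b/c\leq 2\varepsilon\leq a/b$ (an interval which is nonempty precisely when $b\leq\sqrt{ac}$), while you apply Cauchy--Schwarz pointwise and invoke the discriminant criterion for the quadratic form $as^2-2bst+ct^2$. Both are instances of the standard positivity test for a $2\times 2$ symmetric form, so there is no substantive difference.
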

\begin{proof}
By Young's inequality,  for all $\varepsilon>0$ we have
\[
g(\mathscr{A}_x h,\partial_p\h)\geq -\varepsilon g(\partial_ph,\partial_p\h)-\frac{1}{4\varepsilon}g(\mathscr{A}_x h,\mathscr{A}_x \h)\:,
\]
whence $2b\I_{xp}\geq -2b\varepsilon\I_{pp}-(b/2\varepsilon)\I_{xx}$ and so
\[
\mathcal{E}[h]\geq k\,\D[h]+\left(c-\frac{b}{2\varepsilon}\right)\I_{xx}[h]+(a-2\varepsilon b)\I_{pp}[h]\geq k\,\D[h]\:,
\]
provided $b/c\leq 2\varepsilon\leq a/b$.
\end{proof}
The bound from above, which requires the assumptions of the main theorem (except Assumption~\ref{logsobineqass}), is more complicated. Since 
\[
a\, \I_{pp}[h]+2b\,\I_{xp}[h]+c\,\I_{pp}[h]\leq\max (a+b,b+c)(\I_{pp}[h]+\I_{xx}[h])\:,
\]
it suffices to prove the following.
\begin{Proposition}\label{upperboundpro}
Let the Assumptions~\ref{cbcass},~\ref{gpos},~\ref{ABass},~\ref{Wass} hold. There exists a region $\Omega\subset\R^3$ such that $\Omega\subset\{(a,b,c):b\leq \sqrt{ac}\}$ and, for all $(a,b,c)\in\Omega$, there exists $d>0$ such that
\begin{equation}\label{upperbound}
\frac{d}{dt}\mathcal{E}[h]\leq -d\big(\I_{xx}[h]+\I_{pp}[h]\big)\:.
\end{equation}
\end{Proposition}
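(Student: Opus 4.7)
The strategy is to linearly combine Lemmas \ref{dtentropy}--\ref{dixx} with weights $k,a,2b,c$, producing an explicit formula for $\frac{d}{dt}\mathcal{E}[h]$, and then to dominate every indefinite-sign term by the few contributions that carry a guaranteed negative sign, via Young's inequality. The guaranteed negative pieces are: $-k\,\I_{pp}$ from $\dot{\D}$; the $-2b\,\I_{xx}$ produced by $\dot{\I}_{xp}$; the Ricci term $-2a\int h\,\widetilde{\Ric}(\partial_p\bar h,\partial_p\bar h)\le -2a\sigma_1\,\I_{pp}$ supplied by Assumption \ref{cbcass}; and the pointwise quadratic form in the two Hessians $H=\partial_p^2\bar h$ and $K=\nabla_p(\mathscr{A}_x\bar h)_*$,
\[
-\!\int h\bigl(2a\,H\cdot H+4b\,H\cdot K+2c\,K\cdot K\bigr)\,dx\,d\mu,
\]
which is non-positive iff $b^2\le ac$ and, for strict inequality, leaves a reserve
\[
-\delta_0\!\int h\bigl(a\,|H|_g^2+c\,|K|_g^2\bigr)\,dx\,d\mu,\qquad \delta_0=\delta_0(a,b,c)>0,
\]
to be spent absorbing cross terms. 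The condition $b^2<ac$ cuts out the interior of the region $\{b\le\sqrt{ac}\}$ appearing in Lemma \ref{lowerbound}.

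Three families of indefinite terms then need to be dominated. First, the mixed $-2a\,\I_{xp}$ coming from $\dot{\I}_{pp}$: Young's inequality gives $|2a\,\I_{xp}|\le a\eta\,\I_{pp}+(a/\eta)\,\I_{xx}$ for any $\eta>0$, and a choice $a/(2b)<\eta<k/a$ (nonempty as soon as $2bk>a^2$) ensures absorption into $-k\I_{pp}-2b\I_{xx}$. Second, the $\widetilde{\Ric}(\mathscr{A}_x\bar h,\partial_p\bar h)$ term in $\dot{\I}_{xp}$ is controlled via the upper bound in Assumption \ref{cbcass}, then split between $\I_{pp}$ and $\I_{xx}$ by a further Young step. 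Third, and most delicate, the terms carrying $\mathscr{B}_x h$, $\mathscr{C}_x\bar h$, $\mathscr{C}_x h$ and their $W$-contractions in Lemmas \ref{dixp} and \ref{dixx}: Assumptions \ref{ABass} and \ref{Wass}, via \eqref{gobs} and \eqref{ABobs}, yield the pointwise bounds
\[
|\mathscr{B}_x h|_g^2\le \beta\,|\mathscr{A}_x h|_g^2,\quad |(\mathscr{C}_x\bar h)_*|_g^2\le\gamma\,|\mathscr{A}_x\bar h|_g^2,\quad \bigl|\mathscr{C}_x h(W_*,\cdot)\bigr|_g^2\le\omega\,|\mathscr{A}_x h|_g^2.
\]
Factoring the common $h$ out of these tensors (using $\mathscr{B}_x h=h\,\mathscr{B}_x\bar h$ and $\mathscr{C}_x h=h\,\mathscr{C}_x\bar h$) and applying Young's inequality to each bilinear integrand, every such cross term splits into one piece absorbed by the Hessian reserve $\delta_0\int h(a|H|^2+c|K|^2)$ and one piece proportional to $\I_{pp}$ or $\I_{xx}$, in turn absorbed by $-k\I_{pp}-2b\I_{xx}$.

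The main obstacle is the multi-parameter bookkeeping. One needs $b^2<ac$ strictly (to keep a Hessian reserve), $b$ not too small relative to $a$ (so the $\dot{\I}_{pp}$ mixed term and the $\I_{xx}$-side cross terms still fit into $-2b\I_{xx}$), and finally $k$ large compared to all the Young constants generated above. A scaling of the form $a=c=\varepsilon$ and $b=\vartheta\varepsilon$ with $\vartheta\in(0,1)$ and $\varepsilon>0$ small, followed by $k$ chosen sufficiently large depending only on $\sigma_1,\sigma_2,\beta,\gamma,\omega$ and $\vartheta$, should deliver a nonempty open region $\Omega\subset\{b\le\sqrt{ac}\}$ on which \eqref{upperbound} holds for some explicit $d>0$, closing the proof of the proposition.
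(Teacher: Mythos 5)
Your overall strategy is sound and essentially a repackaging of the paper's own proof. The paper's Lemma~\ref{bounds} applies Young directly to the mixed Hessian term $-2\int h\,\partial_p^2\bar h\cdot\nabla_p(\mathscr A_x\bar h)_*$ and then tracks the resulting coefficients of the two non-negative quantities $Q^2_{pp}=\int h\,\partial_p^2\bar h\cdot\nabla_p^2\bar h$ and $Q^2_{xp}=\int h\,\partial_p(\mathscr A_x\bar h)\cdot\nabla_p(\mathscr A_x\bar h)_*$ through an explicit choice of ten Young parameters; you instead keep the three Hessian integrals as a single quadratic form and observe that its sign is governed by $b^2\le ac$, which is a cleaner way to say the same thing and also explains why the admissible region automatically sits inside the region of Lemma~\ref{lowerbound}. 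The treatment of the Ricci cross term and of the $\mathscr B_x,\mathscr C_x, W$ terms via Assumptions~\ref{cbcass}--\ref{Wass} and \eqref{gobs}--\eqref{ABobs} is also the same as the paper's.

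The gap is in the final step, and it is not merely cosmetic. Your proposed scaling $a=c=\varepsilon$, $b=\vartheta\varepsilon$ with $\vartheta\in(0,1)$, ``then take $k$ large'', cannot work in general, because $k$ multiplies $\D$ and $\tfrac{d}{dt}\D=-\I_{pp}$, so increasing $k$ only strengthens the $\I_{pp}$ side of the ledger and contributes nothing to $\I_{xx}$. The entire $\I_{xx}$ budget comes from the single $-\I_{xx}$ term in $\tfrac{d}{dt}\I_{xp}$, weighted by $2b$. Now consider the positive $\I_{xx}$ contributions generated by $\tfrac{d}{dt}\I_{xx}$, which carry the prefactor $c$: after your Young splitting, the term $4\int h\,\partial_p(\mathscr A_x\bar h)\cdot(\mathscr C_x\bar h)_*$ produces a piece $\propto\gamma\,\I_{xx}$ and a companion piece $\propto\int h|K|^2$ that must be absorbed by the Hessian reserve $\delta_0\,c\int h|K|^2$; since $\delta_0$ is bounded (it cannot exceed $2$, and is at most $2(1-\vartheta)$ under $b=\vartheta\sqrt{ac}$), the $\I_{xx}$ piece has a coefficient bounded below by a fixed multiple of $c\gamma$. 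The same phenomenon occurs for the $\mathscr B_x$ term (cost $\gtrsim c\sqrt{\beta}\,\I_{xx}$) and the $W$ term (cost $\gtrsim c\sqrt{\omega}\,\I_{xx}$). Absorbing these into $-2b\,\I_{xx}$ forces $b\gtrsim c\,(\sqrt{\beta}+\gamma+\sqrt{\omega})$, and combined with $b\le\sqrt{ac}$ this gives $a\gtrsim c\,(\sqrt{\beta}+\gamma+\sqrt{\omega})^2$. So $a=c$ fails whenever $\sqrt{\beta}+\gamma+\sqrt{\omega}$ is not small, which is exactly the situation in the relativistic application of Section~\ref{appl}. This is why the paper's proof takes $c>1/s$, $1+cs<b<2cs$ and $a>\max\{(1+cs)(\tfrac{16}{7}s+4\gamma),\,4s^2c\}$ with $s=2+\beta+16\gamma+\omega$ --- i.e., a hierarchy $a\gg b, c$ --- and uses $k$ large only to close the $\I_{pp}$ side. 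You should replace the proposed scaling by a hierarchy of that type and actually verify the coefficient bookkeeping, rather than deferring it.
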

\begin{Remark}
The best constant in the inequality~\eqref{upperbound} may be written as $\bar{d}=\sup_{\bar{\Omega}} d$, where $\bar{\Omega}$ is the largest region for which Proposition~\ref{upperboundpro} holds. We refrain from computing it explicitly, since the method we use is anyway unsuitable to obtain the optimal rate of decay of the entropy.
\end{Remark}
The proof of the proposition is based on the following lemma.
\begin{Lemma}\label{bounds}
For all constants $\varepsilon_1,\dots,\varepsilon_{10}>0$ we have
\begin{equation}\label{dippbound}
\frac{d}{dt}\I_{pp}[h]\leq 2\varepsilon_1\I_{xx}[h]+\left(\frac{1}{2\varepsilon_1}-2\sigma_1\right)\I_{pp}[h]-2Q^2_{pp}\:,
\end{equation}
\begin{align}\label{dixpbound}
\frac{d}{dt}\I_{xp}[h]\leq &\Big[\varepsilon_2 \sigma+\varepsilon_3\sigma_1+2\varepsilon_5\gamma+\varepsilon_7\omega+\varepsilon_6\beta-1\Big]\I_{xx}[h]\nonumber\\
&+\frac{1}{4}\left(\frac{\sigma}{\varepsilon_2}+\frac{\sigma_1}{\varepsilon_3}+\frac{1}{\varepsilon_6}+\frac{1}{\varepsilon_7}\right)\I_{pp}[h]
+\left(2\varepsilon_4+\frac{1}{2\varepsilon_5}\right)Q^2_{pp}+\frac{1}{2\varepsilon_4}Q^2_{xp}\:,
\end{align}
\begin{align}\label{dixxbound}
\frac{d}{dt}\I_{xx}[h]\leq& \left(4\varepsilon_8\gamma+\frac{1}{2\varepsilon_9}+2\varepsilon_9\beta+2\varepsilon_{10}\omega+\frac{1}{2\varepsilon_{10}}\right)\I_{xx}[h]+\left(\frac{1}{\varepsilon_8}-2\right)Q^2_{xp}\:,
\end{align}
where
\[
Q^2_{pp}=\int h\,\partial_p^2\h\cdot\nabla_p^2\h\,dx\,d\mu\:,\quad Q_{xp}^2=\int h\,\partial_p(\mathscr{A}_x\h)\cdot\nabla_p(\mathscr{A}_x\h_*)\,dx\,d\mu\:.
\]
\end{Lemma}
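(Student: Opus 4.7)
The plan is to start from the exact identities for $d\I_{pp}/dt$, $d\I_{xp}/dt$, and $d\I_{xx}/dt$ produced in Lemmas~\ref{dipp}--\ref{dixx}, and to estimate every term of indefinite sign by positive multiples of the four reference quantities $\I_{xx}$, $\I_{pp}$, $Q_{pp}^2$, $Q_{xp}^2$. Three ingredients are used throughout: (a) Young's inequality on metric inner products, $|g(X,Y)|\le\varepsilon\,g(X,X)+(4\varepsilon)^{-1}g(Y,Y)$; (b) Cauchy--Schwarz for the tensor inner product, $|R\cdot S|\le (R\cdot R_*)^{1/2}(S_*\cdot S)^{1/2}$, valid because the inner product induced on $(2,0)$ tensors by the positive-definite $g$ is itself positive-definite; and (c) the pointwise bounds of Assumptions~\ref{cbcass},~\ref{ABass},~\ref{Wass}, read through the identifications~\eqref{gobs} and~\eqref{ABobs}.

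The bound~\eqref{dippbound} is immediate: in Lemma~\ref{dipp} the leading term $-2\I_{xp}$ is split by (a) with parameter $\varepsilon_1$; the Ricci contribution is bounded above using only $\widetilde{\Ric}\ge\sigma_1 g$; the Hessian term is preserved as $-2Q_{pp}^2$. The bound~\eqref{dixxbound} follows from the same recipe applied to the four terms of Lemma~\ref{dixx}: the Hessian-like term is retained as $-2Q_{xp}^2$; the term involving $\mathscr{B}_x h=h\,\mathscr{B}_x\h$ is controlled by (a) with parameter $\varepsilon_9$ together with $B\le\beta A$; the tensor contraction with $(\mathscr{C}_x h)_*$ is controlled by (b) followed by (a) with parameter $\varepsilon_8$, together with $C\le \gamma A$; and $\int\mathscr{C}_x h(W_*,(\mathscr{A}_x\h)_*)$, after using $\mathscr{C}_x h=h\,\mathscr{C}_x\h$ and the definition of $K^{(I)}$ to rewrite it as an $h$-weighted metric inner product involving $\mathscr{A}_x\h$ and the vector with components $(\partial_{x^I}\h)K^{(I)}$, is bounded by (a) with parameter $\varepsilon_{10}$ together with Assumption~\ref{Wass}, $R\le\omega A$.

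The estimate~\eqref{dixpbound} is the genuinely delicate one, since all six terms of Lemma~\ref{dixp} survive. The leading $-\I_{xx}$ is kept. For the Ricci term I decompose $\widetilde{\Ric}=\sigma_1 g+T$ with $0\le T\le\sigma g$; positive-semidefiniteness of $T$ yields $|T(X,Y)|\le\sigma\sqrt{g(X,X)g(Y,Y)}$, and two applications of (a) with parameters $\varepsilon_2,\varepsilon_3$ deliver the coefficients $\sigma\varepsilon_2+\sigma_1\varepsilon_3$ of $\I_{xx}$ and $\sigma/(4\varepsilon_2)+\sigma_1/(4\varepsilon_3)$ of $\I_{pp}$. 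The tensor term $-2\int h\,\partial_p^2\h\cdot\nabla_p(\mathscr{A}_x\h)_*$ is controlled by (b) followed by (a) with parameter $\varepsilon_4$, producing $2\varepsilon_4\,Q_{pp}^2+(2\varepsilon_4)^{-1}Q_{xp}^2$. The term $2\int h\,\partial_p^2\h\cdot(\mathscr{C}_x\h)_*$ is handled analogously with parameter $\varepsilon_5$, and Assumption~\ref{ABass} converts the resulting $\int h\,|\mathscr{C}_x\h|^2$ into $\gamma\,\I_{xx}$. The remaining two terms $\int g(\partial_p\h,\mathscr{B}_x h)$ and $\int\mathscr{C}_x h(W_*,\nabla_p\h)$ are treated exactly as in the preceding paragraph, with parameters $\varepsilon_6$ and $\varepsilon_7$ respectively. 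Collecting the ten contributions assembles~\eqref{dixpbound}. The main technical obstacle is bookkeeping: one must carefully distinguish $h$ from $\h$ when extracting the $h$-weighted form that appears in the definitions of $\I_{pp}$, $\I_{xx}$, $Q_{pp}^2$, $Q_{xp}^2$, and one must orient each Young step so that the Hessian-like quantities $Q_{pp}^2$, $Q_{xp}^2$ appear on the right-hand side with positive coefficients that can later be absorbed by the negative $-2Q_{pp}^2$ and $-2Q_{xp}^2$ already present in~\eqref{dippbound} and~\eqref{dixxbound}.
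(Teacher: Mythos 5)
Your proposal is correct and follows essentially the same strategy as the paper: start from the exact identities of Lemmas~\ref{dipp}--\ref{dixx}, retain the sign-definite Hessian terms as $-2Q_{pp}^2$ and $-2Q_{xp}^2$, and estimate every other term by Young's inequality on $g$-inner products (or on tensor contractions), using $\widetilde{\Ric}\ge\sigma_1 g$, the two-sided bound of Assumption~\ref{cbcass}, and the comparison bounds $B\le\beta A$, $C\le\gamma A$, $R\le\omega A$. The only cosmetic difference is in the Ricci term of $d\I_{xp}/dt$: you split $\widetilde{\Ric}=\sigma_1 g+T$ with $0\le T\le\sigma g$ and apply Cauchy--Schwarz to $T$, whereas the paper completes the square inside $\widetilde{\Ric}$ and applies the lower bound to the squared term and the upper bound to the two diagonal terms; the two routes yield identical coefficients $\varepsilon_2\sigma+\varepsilon_3\sigma_1$ and $\sigma/(4\varepsilon_2)+\sigma_1/(4\varepsilon_3)$.
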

\begin{proof}
The inequality~\eqref{dippbound} is a straightforward consequence of Lemma~\ref{dipp}, the inequality $\I_{xp}[h]\geq -\varepsilon_1\I_{xx}[h]-(4\varepsilon_1)^{-1}\I_{pp}[h]$, and Assumption~\ref{cbcass}.
We now prove~\eqref{dixpbound}. Using the identity
\begin{align*}
\widetilde{\Ric}(\mathscr{A}_x \h,\partial_p\h)=&\widetilde{\Ric}\left(\sqrt{\varepsilon_2}\mathscr{A}_x \h+\frac{1}{\sqrt{4\varepsilon_2}}\partial_p\h,\sqrt{\varepsilon_2}\mathscr{A}_x \h+\frac{1}{\sqrt{4\varepsilon_2}}\partial_p\h,\right)\\
&-\varepsilon_2\widetilde{\Ric}(\mathscr{A}_x \h,\mathscr{A}_x \h)-\frac{1}{4\varepsilon_2}\widetilde{\Ric}(\partial_p\h,\partial_p\h)\:,
\end{align*}
together with Assumption~\ref{cbcass} and $\I_{xp}[h]\geq -\varepsilon_3\I_{xx}[h]-(4\varepsilon_3)^{-1}\I_{pp}[h]$, we get
\begin{equation}\label{tem1}
-\int h\,\widetilde{\Ric}(\mathscr{A}_x \h,\partial_p\h)\leq (\varepsilon_2\sigma+\varepsilon_3\sigma_1)\I_{xx}[h]+\left(\frac{\sigma}{4\varepsilon_2}+\frac{\sigma_1}{4\varepsilon_3}\right)\I_{pp}[h]\:.
\end{equation}
By Young's inequality 
\begin{equation}
-2\int h\, \partial_p^2\h\cdot\nabla_p (\mathscr{A}_x \h)_*\leq 2\varepsilon_4\int h\,\partial_p^2\h\cdot\nabla_p^2\h+\frac{1}{2\varepsilon_4}\int h\,\partial_p(\mathscr{A}_x \h)\cdot\nabla_p(\mathscr{A}_x \h)_*\:.\label{tem2}
\end{equation}
By Young's inequality,~\eqref{ABobs}, Assumption~\ref{ABass} and~\eqref{gobs} 
\begin{align}\label{tem3}
2\int h\,\partial_p^2\h\cdot(\mathscr{C}_x\h)_*&\leq 2\varepsilon_5\!\int h\,\mathscr{C}_x\h\cdot(\mathscr{C}_x\h)_*+\frac{1}{2\varepsilon_5}\int h\,\partial^2_p\h\cdot\nabla^2_p\h\nonumber\\
&\leq 2\varepsilon_5\gamma\I_{xx}[h]+\frac{1}{2\varepsilon_5}\int h\,\partial^2_p\h\cdot\nabla^2_p\h\:.
\end{align}
Likewise
\begin{equation}\label{tem4}
\int g(\partial_p\h,\mathscr{B}_x h)\leq\varepsilon_6\int h\, g(\mathscr{B}_x \h,\mathscr{B}_x \h)+\frac{1}{4\varepsilon_6}\I_{pp}[h]\leq  \varepsilon_6\beta\I_{xx}[h]+\frac{1}{4\varepsilon_6}\I_{pp}[h]\:.
\end{equation}
Finally by Assumption~\ref{Wass},
\begin{align}\label{tem5}
\int\mathscr{C}_xh(W_*,\nabla_p\h)&=\int g(K^{(I)}\partial_{x^I}h,\partial_p\h)\nonumber\\
&\leq\varepsilon_7\int h\,g(K^{(I)},K^{(J)})\partial_{x^I}\h\,\partial_{x^J}\h+\frac{1}{4\varepsilon_7}\int g(\partial_ph,\partial_p\h)\nonumber\\
&\leq\varepsilon_7\omega\I_{xx}[h]+\frac{1}{4\varepsilon_7}\I_{pp}[h]\:.
\end{align}
Using the inequalities~\eqref{tem1}--\eqref{tem5} in Lemma~\ref{dixp} concludes the proof of~\eqref{dixpbound}. The proof of~\eqref{dixxbound} is similar. Reasoning as before one can prove that
\begin{align*}
&4\int h\,\partial_p(\mathscr{A}_x\h)\cdot(\mathscr{C}_xh)_*\leq 4\varepsilon_8\gamma\I_{xx}[h]+\frac{1}{\varepsilon_8}Q^2_{xp}\:,\\
&2\int g(\mathscr{A}_x\h,\mathscr{B}_xh)\leq \Big(2\varepsilon_9\beta+\frac{1}{2\varepsilon_9}\Big)\I_{xx}[h]\:,\\
&2\int\mathscr{C}_xh(W_*,(\mathscr{A}_x\h)_*)=2\int h\,g(K^{(I)}\partial_{x^I}\h,(\mathscr{A}_x\h))\leq\Big(2\varepsilon_{10}\omega+\frac{1}{2\varepsilon_{10}}\Big)\I_{xx}[h]
\end{align*}
and substituting in Lemma~\ref{dixx} completes the proof.
\end{proof}
\begin{Remark}
We are going to apply Lemma~\ref{bounds} for special values of the constants $\varepsilon_1,\dots,\varepsilon_{10}$. In its generality, Lemma~\ref{bounds} could be useful to improve the constant $d$ in~\eqref{upperbound}.
\end{Remark}
\begin{proof}[Proof of Proposition~\ref{upperboundpro}]
In the inequalities ~\eqref{dippbound}--\eqref{dixxbound} we set
\begin{align*}
&\varepsilon_1=(2a)^{-1}\:,\quad\varepsilon_2=\varepsilon_3=\varepsilon_6=\varepsilon_7=\frac{1}{4}(\sigma_2+\beta+\omega)^{-1}\:,\\
&\varepsilon_4=\frac{8}{7}(2+\beta+16\gamma+\omega)\:,\quad\varepsilon_5=(8\gamma)^{-1}\:,\quad\varepsilon_8=4\:,\ \varepsilon_9=\varepsilon_{10}=\frac{1}{2}\:.
\end{align*}
So doing we obtain
\begin{align*}
&\frac{d}{dt}\I_{pp}[h]\leq a^{-1}\I_{xx}[h]+(a-2\sigma_1)\I_{pp}[h]-2Q_{pp}^2\:,\\
&\frac{d}{dt}\I_{xp}[h]\leq -\frac{1}{2}\I_{xx}[h]+s_1s_2\I_{pp}[h]+\left(\frac{16s}{7}+4\gamma\right)Q_{pp}^2+\frac{7}{16s}Q_{xp}^2\:,\\
&\frac{d}{dt}\I_{xx}[h]\leq s\I_{xx}[h]-\frac{7}{4}Q_{xp}^2\:,
\end{align*}
where 
\[
s_1=\sigma_2+\beta+\omega\:,\ s_2=2+\sigma_2\:,\ s=2+\beta+16\gamma+\omega\:.
\]
Therefore
\begin{align*}
\frac{d}{dt}\mathcal{E}[h]&=k\,\frac{d}{dt}\D[h]+a\,\frac{d}{dt}\I_{pp}[h]+2b\,\frac{d}{dt}\I_{xp}[h]+c\,\frac{d}{dt}\I_{xx}[h]\\[0.1cm]
&\leq [-k+a(a-2\sigma_1)+2bs_1s_2]\,\I_{pp}[h]+(1+cs-b)\,\I_{xx}[h]\\
&\quad +2\left[b\left(\frac{16}{7}s+4\gamma\right)-a\right]Q_{pp}^2+\frac{7}{4}\left(\frac{b}{2s}-c\right)Q_{xp}^2\:.
\end{align*}
It is clear that the coefficient of $\I_{pp}[h]$ can be made negative by choosing $k$ sufficiently large, for all values of the other constants. To make the coefficients of $\I_{xx}[h]$, $Q_{pp}^2$, $Q_{xp}^2$ negative we require that
\[
b>1+cs\:,\quad b<\frac{a}{\frac{16}{7}s+4\gamma}\:,\quad b<2cs\:.
\]
This is possible as soon as
\[
a>(1+cs)(\frac{16}{7}s+4\gamma)\qquad\text{ and }\qquad c>s^{-1}\:.
\]
If we further require that $a>4s^2c$, then $2cs<\sqrt{ac}$ and therefore $b<2cs$ implies $b<\sqrt{ac}$ as well. This completes the proof of the proposition.
\end{proof}

\subsection{Completion of the proof}
To complete the proof of Theorem~\ref{maintheo} we appeal to Assumption~\ref{logsobineqass}.  Using the logarithmic Sobolev inequality~\eqref{logsobin} in~\eqref{upperbound} we obtain 
\[
\frac{d}{dt}\mathcal{E}[h]\leq -(2d\alpha)\D[h]
\] 
and combining with the second inequality in~\eqref{goal} we infer that there exists a constant $\lambda>0$ such that
\[
\frac{d}{dt}\mathcal{E}[h]\leq -\lambda\mathcal{E}[h]\:.
\]
Whence $\mathcal{E}[h]\leq \mathcal{E}[h_\mathrm{in}]\exp(-\lambda t)\leq C(\I_{xx}[h_\mathrm{in}]+\I_{pp}[h_\mathrm{in}])\exp(-\lambda t)$ and by the lower bound $\mathcal{E}[h]\geq k\,\D[h]$, see Lemma~\ref{lowerbound}, the entropy decays exponentially as well, which is the main claim of Theorem~\ref{maintheo}.

\begin{appendix}
\section{Appendix: Cauchy's problem}
In this appendix we discuss the global existence and uniqueness of solutions to the Cauchy problem for equation~\eqref{maineq} in the case when the dimensions of the spaces $\N$ and $\M$ coincide, i.e., $N=M$. 
The following discussion is based on the methods introduced in~\cite[Ch.~5]{helffer}, except that we work in a different functions space. To adhere with the conventions used in~\cite{helffer}, we rewrite~\eqref{maineq} as
\[
\partial_th+Ah=0\:,
\]
where
\[
A=-\Delta_p-W+v(p)\cdot\nabla_x=-L+T\:.
\]
The domain $D(A)$ of the operator $A$ is chosen as the space of $C^\infty$ functions on $\mathbb{T}^N\times\R^N$ with compact support in the $p\in\R^N$ variable, which is dense in $\mathcal{H}:=L^2(dxd\mu)$. Our first purpose is to prove that the closure of $A$ generates a contraction (dissipative) semigroup in $\mathcal{H}$. To this end we need to assume that the quantities $g,v,E$ are $C^\infty$.
Furthermore we assume that 
\begin{equation}\label{boundstemp}
\frac{g^{ij}(p)}{|p|^2}\to 0\:,\ \text{as }|p|\to\infty\ \ \forall\, i,j=1,\dots N\:.
\end{equation}
We divide the proof in three steps.
\subsubsection*{Step 1: $A$ is accretive.} By~\eqref{IPF1},
\[
<\!h|Ah\!>_\mathcal{H}=-<\!h|Lh\!>_\mathcal{H}+<\!h|Th\!>_\mathcal{H}=\int g(\partial_ph,\partial_ph)\geq 0\:.
\]
Recall that all integrals are extended over $\mathbb{T}^N\times\R^N$ with measure $dxd\mu$. 
\subsubsection*{Step 2: $A$ is hypoelliptic.}
Let $a=\sqrt{g^{-1}}$ (i.e., the positive definite matrix such that $a^2=g^{-1}$). A straightforward calculation shows that the operator $-A$ can be written in H\"ormander's form:
\[
-A=\sum_{i=1}^N Y_{(i)}^2+Y_0\:,
\] 
where 
\begin{align*}
&Y_0h=(\Div a)\cdot a\nabla_ph-g(\partial_pE,\partial_ph)-Th\:,\\
&Y_{(i)}h=a^{k}_{i}\partial_{p^k}h\:.
\end{align*}
To prove that the operator $A$ is hypoelliptic, we will show that $-A$ satisfies a rank 2 Hormander's condition, namely that the vector fields
\[
Y_{(i)}\:,\quad Z_{(i)}:=[Y_0,Y_{(i)}]
\]
form a basis of $\R^{2N}$. To this purpose we observe that
\[
Z_{(i)}=B_i^k\partial_{p^k}+C_i^I\partial_{x^I}=B^k_iP_{(k)}+C_i^IX_{(I)}\:,
\]
where 
\[
C_i^I=a^k_ i\partial_{p^k}v^{(I)}
\] 
and $B$ is a $p-$dependent $N\times N$ matrix, whose exact form is irrelevant for what follows. Thus the linear transformation $\{X_{(I)},P_{(i)}\}\to\{Y_{(i)},Z_{(i)}\}$ is represented by the matrix 
\[
F=\left(\begin{array}{cc} 0 & a\\ C & B\end{array}\right).
\]
The determinant of $F$ is given by
\[
|\det F\,|=\det a|\det C\,|=\det g|\det (\partial_{p^k}v^{(I)})\,|\:,
\]
which is positive because $\det g>0$ and, by Assumption~\ref{gpos}, the determinant of the matrix $\partial_{p^k}v^{(I)}$ is non-zero. Thus $\{Y_{(i)},Z_{(i)}\}$ is a new basis of $\R^{2N}$, concluding the proof.

\subsubsection*{Step 3: The closure of $A$ is maximally accretive.}
By~\cite[Th. 5.4]{helffer} (see also~\cite{LP}), it is enough to prove that the range of $\lambda+A$ is dense in $\mathcal{H}$, for some $\lambda>0$. We need to show that if $h\in\mathcal{H}$ is such that
\begin{equation}\label{maxac}
<\!h|(\lambda+A)f\!>_\mathcal{H}=0\:,\quad \text{for all } f\in D(A)\:,
\end{equation}
then $h=0$. Note that~\eqref{maxac} implies that $h$ is a distributional solution of $$(\lambda-L-T)h=0\:.$$ Since the operator in the left hand side of the latter equation is hypoelliptic, then we may assume that $h\in C^\infty$. Let us begin by proving that the following identity holds:
\begin{equation}\label{impid}
\lambda\int\phi^2h^2+\int g(\partial_p(\phi h),\partial_p(\phi h))=\int h^2g(\partial_p\phi,\partial_p\phi)-\int h^2\phi T\phi\:,
\end{equation}
for all $\phi\in D(A)$.
To prove~\eqref{impid}, we use that, by (i)-(ii) of Lemma~\ref{identities},
\[
(\lambda+A)(f_1f_2)=f_1(\lambda+A)f_2+f_2Af_1-2g(\partial_pf_1,\partial_pf_2)\:,\quad\text{for all }f_1,f_2\in C^\infty\:.
\]
Setting $f_1=\phi$, $f_2=\phi h$ and multiplying by $h$ the resulting identity we get
\[
\phi h(\lambda+A)(\phi h)=h(\lambda+A)(\phi^2 h)-h^2\phi A\phi+2hg(\partial_p\phi,\partial_p(\phi h))\:.
\]
Integrating and using that $<\!h|(\lambda+A)(\phi^2 h)\!>_\mathcal{H}=0$, by~\eqref{maxac}, we have
\[
\int \phi h (\lambda+A)(\phi h)=\int h^2\phi L\phi-\int h^2\phi T\phi+2\int h g(\partial_p\phi,\partial_p(\phi h))\:.
\]
Using~\eqref{IPF1} in the l.h.s. and in the first term in the r.h.s. of the previous identity completes the proof of~\eqref{impid}. 
Now let $k\in\mathbb{N}$ and choose a family of test functions $\phi_k$ of the form
\[
\phi_k(x,p)=\psi(p/k)\:, 
\]
where $\psi\in C^\infty_c$, $0\leq \psi\leq 1$, $\psi=1$ for $p\in B(0,1/2)$ and $\mathrm{supp\,\psi}\subset B(0,1)$. Whence $T\phi_k=0$. Substituting in~\eqref{impid} we obtain
\[
\lambda\int \phi_k^2 h^2\leq \frac{1}{k^2}\int h^2 g^{ij}\partial_{p^i}\psi\partial_{p^j}\psi\,\chi_{|p|\leq k}\:.
\]
Having assumed~\eqref{boundstemp}, we obtain
\[
\lambda\int\phi^2_k h^2\leq  \epsilon(k)\:,
\]
where $\epsilon(k)\to 0$ as $k\to\infty$. 
This finally entails that $h\equiv 0$.

We may now sketch the proof of the global well-posedeness of the Cauchy problem in the setting of $L^1$ solutions.

\begin{Theorem}\label{globalex}
Let $g,v,E$ be $C^\infty$ functions such that~\eqref{boundstemp} holds. 
Given $0\leq h_\mathrm{in}\in L^1(\mathbb{T}^N\times\R^N; dxd\mu)$, there exists a unique $h\in C([0,\infty); L^1(\mathbb{T}^N\times\R^N;dxd\mu))$ solution of~\eqref{kineticFP3} with initial datum $h_\mathrm{in}$. 
\end{Theorem}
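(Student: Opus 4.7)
Steps~1--3 above establish, via the Lumer--Phillips theorem, that $-\bar{A}$ generates a strongly continuous contraction semigroup $(S(t))_{t\geq 0}$ on $\mathcal{H}=L^2(dxd\mu)$. The plan is to extend $S(t)$ to $L^1(dxd\mu)$ by proving $L^1$-contraction and preservation of nonnegativity on a dense subclass of smooth, compactly supported initial data, and then passing to the limit.

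First, I would restrict to $h_\mathrm{in}\in C^\infty_c(\mathbb{T}^N\times\R^N)\cap D(\bar A)$ and set $h(t)=S(t)h_\mathrm{in}$. Hypoellipticity of $\partial_t+A$, a classical parabolic consequence of the rank-2 H\"ormander condition verified in Step~2, promotes $h$ to a smooth function on $(0,\infty)\times\mathbb{T}^N\times\R^N$, hence a classical pointwise solution of~\eqref{maineq}. Integrating~\eqref{maineq} against $dxd\mu$ gives mass conservation: the transport term drops out by~\eqref{IPFT} applied for each fixed $p$, and $\int Lh\,d\mu=0$ follows from~\eqref{IPF1} with $f\equiv 1$.

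For positivity preservation I would use a standard Kato-type regularization. Let $\chi_\varepsilon\in C^2(\R)$ be convex, nonnegative, with $\chi_\varepsilon\uparrow s\mapsto s^-=\max(-s,0)$ as $\varepsilon\downarrow 0$ and $|\chi_\varepsilon'|\leq 1$. Differentiating $\int\chi_\varepsilon(h)\,dxd\mu$ in time, using the derivation property $T\chi_\varepsilon(h)=\chi_\varepsilon'(h)Th$ together with~\eqref{IPFT} to kill the transport contribution, and the pointwise identity
\[
\chi_\varepsilon'(h)\,Lh=L\chi_\varepsilon(h)-\chi_\varepsilon''(h)\,g(\partial_p h,\partial_p h),
\]
together with $\int L\chi_\varepsilon(h)\,d\mu=0$ (by self-adjointness of $L$ applied to the constant $1$), yields
\[
\frac{d}{dt}\int\chi_\varepsilon(h)\,dxd\mu=-\int\chi_\varepsilon''(h)\,g(\partial_p h,\partial_p h)\,dxd\mu\leq 0.
\]
Monotone convergence as $\varepsilon\to 0$ gives $\|h^-(t)\|_{L^1(dxd\mu)}\leq\|h_\mathrm{in}^-\|_{L^1(dxd\mu)}=0$, so $h(t)\geq 0$. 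Applied to $h_1-h_2$ and $h_2-h_1$ for two smooth solutions, this yields the $L^1$-contraction $\|S(t)h_{\mathrm{in},1}-S(t)h_{\mathrm{in},2}\|_{L^1}\leq\|h_{\mathrm{in},1}-h_{\mathrm{in},2}\|_{L^1}$ on the smooth class.

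Since $C^\infty_c(\mathbb{T}^N\times\R^N)$ is dense in $L^1(dxd\mu)$, the $L^1$-contraction extends $S(t)$ uniquely to a strongly continuous semigroup on $L^1(dxd\mu)$; for $0\leq h_\mathrm{in}\in L^1$ the solution is the $L^1$-limit $h(t)=\lim_k S(t)h_\mathrm{in}^{(k)}$ along any nonnegative smooth approximating sequence, and it inherits nonnegativity, time-continuity in $L^1$, and uniqueness from the contraction. The main obstacle will be justifying the integration-by-parts steps in the Kato computation despite the degenerate diffusion in $p$ and the non-compact $p$-variable: one must show that $h$ and its derivatives decay fast enough at infinity to kill boundary contributions and dominate the $\varepsilon\to 0$ limit. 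This is where the growth bound~\eqref{boundstemp}, the hypoelliptic smoothness of $h$, and the Gaussian-type decay encoded in $d\mu=\Theta e^{-E}\sqrt{|g|}\,dp$ combine to make the formal manipulations rigorous.
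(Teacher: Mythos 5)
Your proposal is correct and follows essentially the same route as the paper: invoke Steps~1--3 (accretivity, hypoellipticity, maximal accretivity) to get the $L^2$ contraction semigroup, then prove $L^1$-contraction and positivity preservation on smooth data via a Kato-type regularization and extend by density to $L^1$. The paper simply cites the references~\cite{CLR,Va} for the $L^1$-contraction and for the ``regularization of $\mathrm{sign}(h)$'' steps that you spell out explicitly, and, like you, does not rigorously justify the integration-by-parts at $p$-infinity, deferring to the same standard methods.
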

\begin{proof}
Approximate the initial datum by a sequence $h_{\mathrm{in},m}$ of smooth, non-negative functions in the domain $D(A)$. By the preceding result, for each fixed $m\in\mathbb{N}$ there exists a unique  $h_m\in C([0,\infty),L^2(dx\,d\mu))$, solution of~\eqref{kineticFP3}. Moreover by standard methods (see~\cite{CLR, Va} for instance) one can prove the $L^1$-contraction property: $\|h_k-h_m\|_{L^1}\leq\|h_{\mathrm{in},k}-h_{\mathrm{in},m}\|_{L^1}$. Thus the sequence $h_m$ converges in $L^1$ to a solution with the regularity stated in the theorem. The uniqueness is also a consequence of the $L^1$-contraction property.   The non-negativity of  solutions  can be proved by studying the evolution of a suitable regularization of $\mathrm{sign} (h)$ (see again~\cite{CLR,Va}). 
\end{proof}

\section{Appendix: Validity of the Logarithmic Sobolev inequality}\label{logproof}
In this appendix we provide a sufficient condition for the validity of the logarithmic Sobolev inequality~\eqref{logsobin}. 

Let $A_{IJ}$ denote the matrix inverse of $A^{IJ}=g(\partial_pv^{(I)},\partial_pv^{(J)})$, i.e., $A^{IJ}A_{JK}=\delta^I_K$. We define the Riemannian metric $G$ on $\M\times\N$ as
\begin{equation}\label{metricG}
G=g_{ij}dp^i\otimes dp^j+A_{IJ}dx^I\otimes dx^J\:.
\end{equation}
Moreover we define the vector field $Q\in\X(\M)$ as
\begin{equation}\label{vectorQ}
Q=W-\partial_{p}\log\sqrt{\det({A_{IJ}})}\:.
\end{equation}
\begin{Theorem}\label{sufflogsob}
The inequality~\eqref{logsobin} holds if  
\begin{equation}\label{complicatedass}
\Ric^G(Z,Z)-(\nabla^GQ_*)(Z,Z)\geq \alpha\,G(Z,Z)\:,\quad\text{for all }Z\in\X(\M\times\N)\:,
\end{equation}
where $\Ric^G$ is the Ricci tensor of $G$ and $\nabla^G$ is the covariant differential associated with $G$. 
\end{Theorem}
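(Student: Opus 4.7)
The strategy is to recognize~\eqref{complicatedass} as the Bakry-Emery curvature-dimension condition for the weighted Riemannian manifold $(\N\times\M, G, dx\,d\mu)$ and then invoke the classical Bakry-Emery logarithmic Sobolev theorem on that space. The whole proof amounts to bookkeeping: one must verify that the gradient-squared term on the manifold reproduces $\I_{xx}[h]+\I_{pp}[h]$ and that the drift vector field associated with the invariant measure equals $Q$.

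First I would check that the Riemannian gradient of $h$ with respect to $G$ produces the correct right-hand side of~\eqref{logsobin}. Since~\eqref{metricG} is block-diagonal in the coordinates $(x^I,p^i)$, its inverse has diagonal blocks $A^{IJ}$ and $g^{ij}$. Consequently, for any smooth $h>0$,
\[
G(\nabla^G h,\nabla^G h) = A^{IJ}\partial_{x^I}h\,\partial_{x^J}h + g^{ij}\partial_{p^i}h\,\partial_{p^j}h = g(\mathscr{A}_x h,\mathscr{A}_x h)+g(\partial_p h,\partial_p h),
\]
where the second equality uses~\eqref{gobs}. Therefore $\int G(\nabla^G h,\nabla^G h)h^{-1}\,dx\,d\mu = \I_{xx}[h]+\I_{pp}[h]$.

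Next I would identify $dx\,d\mu$ as a weighted Riemannian volume for $G$. Since $\sqrt{|G|}=\sqrt{\det A_{IJ}}\sqrt{|g|}$ and $d\mu=\Theta\,u\,\sqrt{|g|}\,dp$ by~\eqref{u}, one has $dx\,d\mu=e^{-V}\,d\mathrm{vol}_G$ with
\[
V(p) = -\log\Theta-\log u+\tfrac{1}{2}\log\det A_{IJ}.
\]
Because $V$ depends only on $p$ and $G|_{\M}=g$, the $G$-gradient of $-V$ coincides with the ordinary gradient of $-V$ on $(\M,g)$, which is precisely the vector field $Q=W-\partial_p\log\sqrt{\det A_{IJ}}$ defined in~\eqref{vectorQ}. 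Since $Q_*=-dV$ is exact, $\nabla^GQ_*$ equals minus the $G$-Hessian of $V$, and the hypothesis~\eqref{complicatedass} rewrites as the standard Bakry-Emery curvature bound
\[
\Ric^G+\nabla^G\nabla^GV \geq \alpha\,G.
\]

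With these identifications in place I would invoke the classical Bakry-Emery theorem~\cite{BE,B} on the weighted Riemannian manifold $(\N\times\M,G,e^{-V}d\mathrm{vol}_G)$, which asserts under the displayed curvature bound that every smooth probability density $h$ satisfies
\[
\int h\log h\,e^{-V}\,d\mathrm{vol}_G \leq \frac{1}{2\alpha}\int\frac{G(\nabla^Gh,\nabla^Gh)}{h}\,e^{-V}\,d\mathrm{vol}_G,
\]
which is exactly~\eqref{logsobin} after substituting the identifications of the preceding paragraphs. No serious obstacle is expected; the only point requiring care is the appearance of the correction $-\partial_p\log\sqrt{\det A_{IJ}}$ in $Q$, which is forced by the discrepancy between the coordinate volume $dx\,dp$ and $d\mathrm{vol}_G$ on the product manifold. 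Once this discrepancy is correctly accounted for, the result is a direct application of the classical theorem.
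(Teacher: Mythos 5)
Your proof is correct and follows essentially the same route as the paper: recognize that the weighted manifold $(\N\times\M,G,dx\,d\mu)$ with drift field $Q$ realizes $\D$ and $\I_{xx}+\I_{pp}$ as the entropy and Fisher information for the non-degenerate Fokker-Planck equation $\partial_t f=\Delta^G f+Qf$, and then invoke the classical Bakry-Emery theorem. The only difference is that you spell out the bookkeeping (the computation of $G(\nabla^G h,\nabla^G h)$, the identification $dx\,d\mu=e^{-V}d\mathrm{vol}_G$, and the verification $Q=-\nabla^G V$) that the paper leaves implicit; this is a helpful addition but not a different argument.
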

\begin{proof}
Consider the non-degenerate Fokker-Planck equation 
\begin{equation}\label{auxeq}
\partial_t f=\Delta^Gf+Qf
\end{equation}
on $\M\times\N$. The entropy functional and entropy dissipation functional associated with~\eqref{auxeq} are exactly $\D$ and $\I=\I_{pp}+\I_{xx}$. Moreover~\eqref{complicatedass} asserts that the metric $G$ and the vector field $Q$ verify the curvature Bakry-Emery bound condition. Thus the logarithmic Sobolev inequality~\eqref{logsobin} follows by the results in~\cite{BE}.
\end{proof}

Using the relations between $\Ric^G,\nabla^G$ and $\Ric,\nabla_p$,  the bound~\eqref{complicatedass} can be expressed in terms of inequalities on the quantities $g, W$. These inequalities are in general very complicated, unless $A_{IJ}$ enjoys some simple structure, as in the statement of Corollary~\ref{corollo} below.

\begin{Corollary}\label{corollo}
Let $A_{IJ}$---the matrix inverse of $A^{IJ}=g(\partial_p v^{(I)},\partial_p v^{(J)})$---be of the form
\begin{equation}\label{zeta}
A_{IJ}=\zeta (p)^2\delta_{IJ}\:,
\end{equation}
for some smooth function $\zeta:\R^M\to (0,\infty)$. If there exist two constants $\kappa_1>\kappa_2\geq 0$ such that
\begin{align*}
&\left[\Ric-\nabla^2_p\log u-\frac{N}{\zeta^2} \partial_p\zeta\otimes\partial_p\zeta\right](X,X)\geq \kappa_1g(X,X)\:,\\
&\left[\Delta_p\log\zeta+\frac{g(\partial_pu,\partial_p\zeta)}{\zeta u}\right]\leq \kappa_2\:,
\end{align*}
for all $X,p\in\R^M$, then~\eqref{complicatedass}  holds.
\end{Corollary}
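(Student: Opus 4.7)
The plan is to exploit the warped-product structure of the metric $G$ in~\eqref{metricG}: the base is $(\M,g)$, the fiber is the flat torus $\N$, and the warping function is $\zeta(p)$. Since $\det(A_{IJ}) = \zeta^{2N}$, the vector field in~\eqref{vectorQ} simplifies to $Q = W - N\partial_p\log\zeta$. The condition~\eqref{complicatedass} concerns the tensor $\Ric^G - \nabla^G Q_*$ on $\M\times\N$, so my goal is to assemble the blocks of this tensor in the splitting $T\M\oplus T\N$ and show that each block, after simplification, matches one of the two bounds the corollary assumes.

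First I would compute the Christoffel symbols of $G$. A direct calculation gives that the only nonzero blocks are $\Gamma^k_{ij}$ (inherited from $g$), $\Gamma^K_{iJ} = (\partial_{p^i}\zeta/\zeta)\,\delta^K_J$, and $\Gamma^k_{IJ} = -\zeta(\partial^k\zeta)\,\delta_{IJ}$; crucially, $\Gamma^k_{iJ}=0$ and $\Gamma^K_{ij}=0$. This is the key structural observation: together with the fact that $Q_*$ annihilates $\partial_{x^I}$ and depends only on $p$, it forces the mixed $p$--$x$ blocks of both $\Ric^G$ and $\nabla^G Q_*$ to vanish, so~\eqref{complicatedass} decouples into independent horizontal and vertical scalar inequalities.

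Next I would read off the two diagonal blocks of $\Ric^G$ from O'Neill's warped-product formulas (or verify them from the Christoffels above):
\begin{align*}
\Ric^G(\partial_{p^i},\partial_{p^j}) &= \Ric_{ij} - (N/\zeta)(\nabla_p^2\zeta)_{ij},\\
\Ric^G(\partial_{x^I},\partial_{x^J}) &= -\big(\zeta\,\Delta_p\zeta + (N-1)|\nabla_p\zeta|^2\big)\,\delta_{IJ}.
\end{align*}
The horizontal block of $\nabla^G Q_*$ coincides with $\nabla_p Q_*$, and the vertical block equals $-\Gamma^k_{IJ}(Q_*)_k = \zeta\, g(\partial_p\zeta,Q)\,\delta_{IJ}$. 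To simplify $\nabla_p Q_*$ I would use the Leibniz identity
\[
\nabla_p^2\log\zeta = \zeta^{-1}\nabla_p^2\zeta - \zeta^{-2}\,\nabla_p\zeta\otimes\nabla_p\zeta,
\]
which yields $\nabla_p Q_* = \nabla_p^2\log u - (N/\zeta)\nabla_p^2\zeta + (N/\zeta^2)\,\nabla_p\zeta\otimes\nabla_p\zeta$. The $(N/\zeta)\nabla_p^2\zeta$ contributions then cancel between $\Ric^G$ and $\nabla^G Q_*$, and the horizontal part of $\Ric^G - \nabla^G Q_*$ collapses to
\[
\Ric - \nabla_p^2\log u - (N/\zeta^2)\,\nabla_p\zeta\otimes\nabla_p\zeta,
\]
which is bounded below by $\kappa_1 g$ precisely by the first hypothesis of the corollary.

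For the vertical part, I would substitute $g(\partial_p\zeta,Q) = g(\partial_p\zeta,W) - (N/\zeta)|\nabla_p\zeta|^2$ with $g(\partial_p\zeta,W) = g(\partial_p\zeta,\partial_p u)/u$, collect the $|\nabla_p\zeta|^2$ terms, and apply $\Delta_p\log\zeta = \Delta_p\zeta/\zeta - |\nabla_p\zeta|^2/\zeta^2$ to rewrite the combination $\zeta\Delta_p\zeta + (N-1)|\nabla_p\zeta|^2 + \zeta g(\partial_p\zeta,Q)$, divided by $\zeta^2 = G(\partial_{x^I},\partial_{x^J})/\delta_{IJ}$, in terms of $\Delta_p\log\zeta + g(\partial_p\zeta,\partial_p u)/(\zeta u)$. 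This is exactly the quantity controlled in the second hypothesis, so the vertical inequality follows. Taking $\alpha$ to be the smaller of the two admissible constants coming from the two hypotheses yields~\eqref{complicatedass}, and Theorem~\ref{sufflogsob} then gives~\eqref{logsobin}.

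The main obstacle is bookkeeping: making sure the two $(N/\zeta)\nabla_p^2\zeta$ contributions in the horizontal block really do cancel (a sign slip in either the warped-product Ricci formula or the Leibniz rule would break the reduction), and correctly combining the $|\nabla_p\zeta|^2/\zeta^2$ terms coming from $\zeta^\sharp$, from the Leibniz expansion of $\nabla_p^2\log\zeta$, and from rewriting $g(\partial_p\zeta,Q)$. Once these cancellations are done cleanly, the remaining work is purely algebraic.
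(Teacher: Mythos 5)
Your reduction follows the paper's route essentially step by step: the warped-product identification, O'Neill's Corollary~43 (Ch.~7) for $\Ric^G$ and Proposition~35 (Ch.~7) for $\nabla^G Q_*$, the simplification $Q=\partial_p(\log u - N\log\zeta)$ coming from $\det(A_{IJ})=\zeta^{2N}$, the Leibniz expansion $\nabla^2_p\log\zeta=\zeta^{-1}\nabla^2_p\zeta-\zeta^{-2}\nabla_p\zeta\otimes\nabla_p\zeta$ producing the cancellation of $(N/\zeta)\nabla^2_p\zeta$ in the horizontal block, and the recombination $\zeta\Delta_p\zeta-|\nabla_p\zeta|^2+\zeta\,g(\partial_p\zeta,W)=\zeta^2\bigl(\Delta_p\log\zeta+g(\partial_pu,\partial_p\zeta)/(\zeta u)\bigr)$ in the vertical block. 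All of this is carried out correctly, and proposing to verify the warped-product formulas directly from the Christoffel symbols rather than citing O'Neill is a harmless variation.

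The one gap is the final assembly. After the block reduction the pointwise bound you hold is
\[
(\Ric^G-\nabla^G Q_*)(Z,Z)\;\geq\;\kappa_1\,G(Z_{|_{\mathcal M}},Z_{|_{\mathcal M}})-\kappa_2\,G(Z_{|_{\mathcal N}},Z_{|_{\mathcal N}})\:,
\]
and you close by ``taking $\alpha$ to be the smaller of the two admissible constants.'' That does not work as stated: the constant supplied by the vertical block is $-\kappa_2\leq 0$, so the naive minimum is non-positive and does not give~\eqref{complicatedass} with $\alpha>0$. This is exactly where the hypothesis $\kappa_1>\kappa_2$ enters and where the explicit constant is produced: the paper asserts the intermediate bound $\geq\kappa_1 G(Z,Z)-\kappa_2 G(Z_{|_{\mathcal N}},Z_{|_{\mathcal N}})$, rewrites it as $(\kappa_1-\kappa_2)G(Z,Z)+\kappa_2 G(Z_{|_{\mathcal M}},Z_{|_{\mathcal M}})$, and drops the last term to read off $\alpha=\kappa_1-\kappa_2$. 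You should reproduce that recombination explicitly rather than ``take the smaller constant''; and note that in doing so you will want to convince yourself that the intermediate bound really does follow from the two block estimates, since the horizontal hypothesis only delivers $\kappa_1 G(Z_{|_{\mathcal M}},Z_{|_{\mathcal M}})$, not $\kappa_1 G(Z,Z)$. This final step is the only place in the corollary where the two hypotheses interact, so it cannot be left implicit.
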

\begin{proof}
For the proof we observe that when $A_{IJ}$ has the form~\eqref{zeta}, the Riemannian manifold $(\M\times\N,G)$ is the warped product of the manifolds $(\R^M,g)$ and $(\mathbb{T}^N,\delta)$, where $\delta$ is the flat Euclidean metric on the torus. 
See~\cite[Ch.~7]{oneill} for an introduction to the geometry of warped product manifolds. In particular, by Corollary 43 of~\cite[Ch.~7]{oneill} we have that, for all horizontal (i.e., tangent to $\M$) vector fields $X,Y$ and vertical (i.e. tangent to $\N$) vector fields $V,W$, the following identities hold:
\begin{align*}
&\Ric^G(X,Y)=\Ric(X,Y)-\frac{N}{\zeta}\nabla^2_p\zeta(X,Y)\:,\\
&\Ric^G(X,V)=0\:,\\
&\Ric^G(V,W)=-\left(\frac{\Delta_p\zeta}{\zeta}+(N-1)\frac{g(\partial_p\zeta,\partial_p\zeta)}{\zeta^2}\right)G(V,W)\:.
\end{align*}  
Moreover the vector field~\eqref{vectorQ} becomes
\[
Q=\partial_p(\log u-N\log\zeta)
\]
and by Proposition 35 of~\cite[Ch.~7]{oneill} we have
\begin{align*}
&\nabla^GQ_*(X,Y)=\nabla_pQ_*(X,Y)=\left(\nabla^2_p\log u-N\nabla_p^2\log\zeta\right)(X,Y)\:,\\
&\nabla^GQ_*(X,V)=0\:,\\
&\nabla^GQ_*(V,W)=\frac{Q(\zeta)}{\zeta}G(V,W)=\left(\frac{g(\partial_pu,\partial_p\zeta)}{\zeta u}-N\frac{g(\partial_p\zeta,\partial_p\zeta)}{\zeta^2}\right)G(V,W)\:.
\end{align*}
Now let $X=Y=Z_{|_\mathcal{M}}$, the projection of $Z$ onto $\mathcal{M}$, and $V=W=Z_{|_\mathcal{N}}$, the projection of $Z$ onto $\mathcal{N}$. Writing $Z=Z_{|_\mathcal{M}}+Z_{|_\mathcal{N}}$ we have
\begin{align*}
(\Ric^G-\nabla^GQ_*)(Z,Z)=&\left(\Ric-\frac{N}{\zeta}\nabla^2_p\zeta\right)(Z_{|_\mathcal{M}},Z_{|_\mathcal{M}})\\
&-\left(\frac{\Delta_p\zeta}{\zeta}+(N-1)\frac{g(\partial_p\zeta,\partial_p\zeta)}{\zeta^2}\right)G(Z_{|_\mathcal{N}},Z_{|_\mathcal{N}})\\
&-\left(\nabla^2_p\log u-N\nabla_p^2\log\zeta\right)(Z_{|_\mathcal{M}},Z_{|_\mathcal{M}})\\
&-\left(\frac{g(\partial_pu,\partial_p\zeta)}{\zeta u}-N\frac{g(\partial_p\zeta,\partial_p\zeta)}{\zeta^2}\right)G(Z_{|_\mathcal{N}},Z_{|_\mathcal{N}})\\
=&\left(\Ric-\nabla^2_p\log u-\frac{N}{\zeta^2}\partial_p\zeta\otimes\partial_p\zeta\right)(Z_{|_\mathcal{M}},Z_{|_\mathcal{M}})\\
&-\left(\Delta_p\log\zeta+\frac{g(\partial_pu,\partial_p\zeta)}{\zeta u}\right)G(Z_{|_\mathcal{N}},Z_{|_\mathcal{N}})\:.
\end{align*}
Under the given assumptions we have
\begin{align*}
(\Ric^G-\nabla^GQ_*)(Z,Z)&\geq \kappa_1G(Z,Z)-\kappa_2G(Z_{|_\mathcal{N}},Z_{|_\mathcal{N}})\\
&=(\kappa_1-\kappa_2)G(Z,Z)+\kappa_2G(Z_{|_\mathcal{M}},Z_{|_\mathcal{M}})\geq(\kappa_1-\kappa_2)G(Z,Z)
\end{align*}
and the conclusion of Corollary~\ref{corollo} follows.

\end{proof}

An example that is covered by Corollary~\ref{corollo} is the Fokker-Planck equation~\eqref{kineticFP2} with the classical velocity field $v(p)=p$ and an isotropic diffusion matrix, i.e.
\[
D_{ij}(p)=\Pi(p)^2\delta_{ij}\:,
\]
where $\Pi$ is a positive function. In particular $g^{ij}=\Pi(p)^2\delta^{ij}$.
Thus, since $v^{(I)}=p^I$ (all indexes run from 1 to $N$ in this example), we have
\[
A^{IJ}=g^{ij}\partial_{p^i}v^{(I)}\partial_{p^j}v^{(J)}=g^{ij}\delta_i^I\delta_j^J=g^{IJ}=\Pi(p)^2\delta^{IJ}\:,
\]
and so~\eqref{zeta} holds with $\zeta(p)=1/\Pi(p)$.

\end{appendix}

\vspace{0.5cm}

\noindent {\bf Acknowledgments:} This work was completed while the author was a long term participant at the program ``Partial Differential Equations in Kinetic Theories" at the Isaac Newton Institute in Cambridge (UK). The valuable comments and suggestions of the two anonymous referees are also acknowledged.


\begin{thebibliography}{99}

\bibitem{AC0} J.~A.~Alc\'antara, S.~Calogero: On a relativistic Fokker-Planck equation in kinetic theory. {\it Kin. Rel. Mod.} {\bf 4}, 401--426 (2011)

\bibitem{AC1} J.~A.~Alc\'antara, S.~Calogero: Newtonian limit and trend to equilibrium for the relativistic Fokker-Planck equation. {\it Preprint arXiv:11075140}


\bibitem{AMTU} A.~Arnold, P.~Markowich, G.~Toscani, A.~Unterreiter: On convex Sobolev inequalities and the rate of convergence to equilibrium for Fokker-Planck type equations.  {\it Comm. Partial Diff. Eqns.} {\bf 26} (1\&2), 43--100 (2001)

\bibitem{angst} J.~Angst: Trend to equilibrium for a class of relativistic diffusions. {\it J.~Math.~Phys.} {\bf 52}, 113703 (2011)

\bibitem{BE} D.~Bakry, M.~Emery: Hypercontractivit\'e de semi-groupes de diffusion. {\it C.R. Acad. Sc. Paris}. S\'erie I, Vol. 299 (15), 775--778  (1984)

\bibitem{B} D.~Bakry: {\it L'hypercontractivit\'e et son utilisation en th\'eorie des semigroupes.} Lectures Notes in Mathematics 1581, Springer (1994)

\bibitem{bdol}
F. Bouchut and J. Dolbeault:
 On Long Time Asymptotics of the Vlasov-Fokker-Planck Equation and of the Vlasov-Poisson-Fokker-Planck System with Coulombic and Newtonian Potential. {\it Diff. Integ. Eqs.} \textbf{8}, 487--514 (1995) 

\bibitem{Carrillo}
J.~A.~Carrillo, G.~Toscani: Exponential convergence toward equilibrium for homogeneous Fokker-Planck type equations. {\it Math. Meth. Appl. Sci.} {\bf 21}, 1269--1286 (1998)

\bibitem{CLR} J.~A.~Carrillo, P.~Lauren\c{c}ot, J.~Rosado: Fermi-Dirac-Fokker-Planck equation: Well-posedness \& long-time asymptotics. {\it J. Diff. Eqns.} {\bf 247}, 2209--2234 (2009)

\bibitem{chow}
B.~Chow, P.~Lu, L.~Ni: {\it Hamilton's Ricci flow.} 
Graduate Studies in Mathematics {\bf 77}, American Mathematical Society, Science Press, New York (2006)

\bibitem{CS}
I.~Csisz\'ar: Information-type measures of difference of probability distributions. {\it Stud. Sc. Math. Hung.} {\bf 2}, 299--318 (1967)

\bibitem{DV} L.~Desvillettes, C.~ Villani: On the trend to global equilibrium in spatially inhomogeneous entropy-dissipating systems: the linear Fokker-Planck equation. {\it Comm. Pure Appl. Math.} {\bf 54}, 1--42 (2001)

\bibitem{Du} R.~M.~Dudley: Lorentz-invariant Markov processes in relativistic phase spaces. {\it Arkiv F\"or Matematik} {\bf 6}, 241--268 (1965)


\bibitem{DH2} J. Dunkel and P. H\"anggi: Relativistic Brownian motion.
  {\it Phys. Rep.} \textbf{471}, 1--73 (2009)
  
    \bibitem{H1} Z.~Haba: Relativistic diffusion.
{\it Phys. Rev. E} {\bf 79}, 021128 (2009) 
     
\bibitem{helffer}
B.~Helffer, F.~Nier: {\it Hypoelliptic Estimates and Spectral Theory for the Fokker-Planck Operators and Witten Laplacians.} Lecture Notes in Mathematics 1862, Springer (2005)

\bibitem{Herau} F.~H\'erau, F.~Nier: Isotropic hypoellipticity and trend to equilibrium for the Fokker-Planck equation with a high-degree potential. 
{\it Arch. Ration. Mech. Anal.} {\bf 171}, 151--218 (2004)

\bibitem{H} L.~H\"ormander: Pseudodifferential operators and non-elliptic boundary problems. {\it Ann. of Math.} (2) {\bf 83}, 129--209 (1966)

\bibitem{leb} G.~Lebeau: Geometric Fokker-Planck equations. {\it Port. Math. Nova S\'erie} {\bf 62}, 469--530  (2005)

\bibitem{LP} G.~Lumer, R.~S.~Philipps: Dissipative operators in a Banach space. {\it Pacific J. Math.} {\bf 11}, 679--698 (1961)

\bibitem{oneill} B.~O'Neill: {\it Semi-Riemannian geometry, with applications to relativity}. Academic Press, Inc. New York (1983)
\bibitem{Risken}
H.~ÊRisken: {\em The Fokker-Planck equation.} Springer Series in Synergetics Vol. 18, Springer-Verlag, Berlin (1989)

\bibitem{Va} 
    J. L. V\'azquez,
     {\it The porous Medium Equation: Mathematical Theory}.
     Oxford Math. Monogr., Clarendon Press/Oxford Univ. Press, Oxford (2007)
     
\bibitem{vil} C.~Viilani: {\it Proceedings of the International Congress of Mathematicians}, Madrid (2006)

\bibitem{hypo}
C.~Villani: {\it Hypocoercivity.} Memoirs of the AMS, n.~950, Vol.~202 (2009)









\end{thebibliography}
\end{document}